\newtheorem{theorem}{Theorem}
\newtheorem{lemma}{Lemma}
\newtheorem{corollary}{Corollary}
\newtheorem{proposition}{Proposition}
\newtheorem{definition}{Definition}
\newcommand{\red}{\textcolor{black}}
\title{The 2-Tessellable Quantum Walk as \\
a Quantum Markov Chain}
\title{Eigenbasis of the Evolution Operator of\\ 2-Tessellable Quantum Walks}
\author{ 
Yusuke Higuchi,$^{1}$ 
Renato Portugal,$^{2}$ 
Iwao Sato,$^{3}$
Etsuo Segawa$^{4}$ 
\\ 
{\small
$^{1}$ Mathematics Laboratories, College of Arts and Sciences, Showa University, \\
Fujiyoshida, Yamanashi 403-0005, Japan\\
$^2$ National Laboratory of Scientific Computing - LNCC, \\
Av. Get{\'u}lio Vargas 333, Petr{\'o}polis, RJ, 25651-075, Brazil \\
$^3$ Oyama National College of Technology,\\
Oyama, Tochigi 323-0806, Japan \\
$^4$ Graduate School of Education Center, Yokohama National University,\\
Hodogaya, Yokohama, 240-8501, Japan\\
} 
} 
\date{\empty }
\begin{document}
\maketitle

\begin{abstract}
Staggered quantum walks on graphs are based on the concept of graph tessellation and generalize some well-known discrete-time quantum walk models. In this work, we address the class of 2-tessellable quantum walks with the goal of obtaining an eigenbasis of the evolution operator. By interpreting the evolution operator as a quantum Markov chain on an underlying multigraph, we define the concept of quantum detailed balance, which helps to obtain the eigenbasis. A subset of the eigenvectors is obtained from the eigenvectors of the double discriminant matrix of the quantum Markov chain. To obtain the remaining eigenvectors, we have to use the quantum detailed balance conditions. If the quantum Markov chain has a quantum detailed balance, there is an eigenvector for each fundamental cycle of the underlying multigraph. If the quantum Markov chain does not have a quantum detailed balance, we have to use two fundamental cycles linked by a path in order to find the remaining eigenvectors. We exemplify the process of obtaining the eigenbasis of the evolution operator using the kagome lattice (the line graph of the hexagonal lattice), which has symmetry properties that help in the calculation process.%%%\footnote[0]{{\it Keywords:} Quantum walks; Staggered Model; Quantum Detailed Balance; Graph Tessellation; Spectral analysis}
\end{abstract}

%%%%%%%%%%%
%%%%%%%%%%%
\section{Introduction}
 
The interest of the scientific community on quantum walks has been increasing unremittingly since the first papers of quantum walks on graphs, such as~\cite{FG98,AAKV01}. This interest seems to be based on at least three reasons: (1) the quantum walk is useful to simulate complex physical systems~\cite{KRBD10,CBS10}, (2) it is an important tool to build new quantum algorithms~\cite{SKW03,Amb04,Sze04a}, and (3) it can be implemented directly in laboratories independently of quantum computers~\cite{GASWWMA13,CMQKSPDSSBM15}. Besides those physical and computational aspects, the mathematical aspects of the quantum walk are very rich and have been the focus of many papers~\cite{TS12,HKSS14,Suz16}. 

A quantum walk is defined on a discrete space, which is modeled by a graph. On the other hand, its time evolution can be either continuous or discrete. There are many attempts to prove the equivalence between the continuous-time and discrete-time approaches,  which are successful only at asymptotic limits or on restricted settings~\cite{Str06,Chi10,PP17}. The continuous-time version comes basically in one form, whose evolution operator is local and is obtained from the graph's adjacency or Laplacian matrix. In this case, the spectral analysis of the graph Laplacian~\cite{HS04} helps to understand the quantum dynamics.  The discrete-time versions have evolution operators that are the product of at least two local operators, and the most general models are (1) the coined model~\cite{AAKV01}, (2) Szegedy's model~\cite{Sze04a}, and (3) the staggered model~\cite{PSFG16}. An extensive comparison of those models is performed in~\cite{KPSS18}. 
In the discrete-time case, there is no relation between the eigenvectors of the graph Laplacian and the eigenvectors of the evolution operator.

In this work, we focus on 2-tessellable quantum walks defined on a set of graphs that can be characterized in many ways, for instance, (1) the set of graphs whose clique graphs are 2-colorable, or (2) the set of graphs that are line graphs of bipartite multigraphs. To obtain the evolution operator of a 2-tessellable quantum walk, we need to find two tessellations $\mathcal{T}_1$ and $\mathcal{T}_2$ whose union covers the edges of the graph. A tessellation is a partition of the vertex set into cliques, called polygons or tiles. For instance, $\mathcal{T}_1=\{\alpha_1,\alpha_2,\dots ,\alpha_m \}$ is a tessellation of a graph $G=(V,E)$ if each $\alpha_{\ell}$ is a clique, $\alpha_{\ell}\cap\alpha_j=\emptyset$ when $\ell\neq j$, and $\cup_{\ell=1}^m \alpha_1=V$. $G$ is 2-tessellable if $E(\mathcal{T}_1\cup\mathcal{T}_2)=E(G)$. 
After finding the tessellations, we define two subspaces $\mathcal{A}$ and $\mathcal{B}$ spanned by the polygons of tessellations $\alpha$ and $\beta$, respectively. Using orthogonal projections on these subspaces, there is a standard procedure to obtain self-adjoint unitary operators $H_1$ and $H_2$ associated with the tessellations $\mathcal{T}_1$ and $\mathcal{T}_2$~\cite{PSFG16}. The evolution operator of the quantum walk is $U_\theta=-e^{i\theta_2 H_2}e^{i\theta_1 H_1}$, where $\theta_1$, $\theta_2$ are angles and $i=\sqrt{-1}$.  In this work, we address the case $\theta_1=\theta_2=\theta$~\cite{POM17}.

In order to find the eigenvalues and eigenvectors of the evolution operator of a 2-tessellable quantum walk, we interpret $U_\theta$ as a quantum Markov chain~\cite{Gudder} on the edges of an underlying multigraph $G_{\textrm{un}}$, whose line graph is the original graph, that is, $G=L(G_{\textrm{un}})$. We define the notion of quantum detailed balance using the amplitudes of the polygons and an $(+1)$-eigenvector of a matrix $T$, whose biadjacent matrix is the discriminant of the polygons of tessellations $\mathcal{T}_1$ and $\mathcal{T}_2$. $T$ is a double discriminant matrix and is a self-adjoint operator. We say that a quantum Markov chain is reversible when $T$ has an $(+1)$-eigenvector, which is called a reversible eigenfunction of $T$. A classical Markov chain is obtained from the quantum chain and the classical detailed balance conditions are obtained from the square modulus of the quantum detailed balance conditions. 
When $T$ has a quantum detailed balance, the reversible eigenfunction is the only $(+1)$-eigenvector of $T$, which means that in the quantum case, the invariant state is always a reversible eigenstate and vice versa,  different from the classical Markov chain, whose transition matrix may have a stationary probability distribution even in the irreducible case. 

In the staggered model, the Hilbert space is spanned by the vertex set. We split the Hilbert space as a direct sum of $(\mathcal{A}+\mathcal{B})$ and $(\mathcal{A}+\mathcal{B})^\perp$. The eigenvectors of $U_\theta$ in $(\mathcal{A}+\mathcal{B})$ are inherited from the eigenvectors of $T$. On the other hand, the eigenvectors of $U_\theta$ in $(\mathcal{A}+\mathcal{B})^\perp$ are obtained from the fundamental cycles of the underlying multigraph $G_{\textrm{un}}$. The definition of fundamental cycles relies on the concept of a spanning tree, which is a subgraph of $G_{\textrm{un}}$ that is a tree and includes all vertices of $G_{\textrm{un}}$~\cite{GY05}. Adding one edge to the spanning tree creates a cycle, which is called fundamental cycle. The number of fundamental cycles is equal to the number of edges of $G_{\textrm{un}}$ not in the spanning tree. If $T$ has a quantum detailed balance, there is an eigenvector of $U_\theta$ with support on each fundamental cycle, whose expression is described in this work. The dimension of $(\mathcal{A}+\mathcal{B})^\perp$ is the first Betti number in this case. If $T$ does not have a quantum detailed balance, we have to use two fundamental cycles $c_0$ and $c_1$ in order to obtain an eigenvector. If  $c_0$ and $c_1$ do not have common vertices, we have to link them with a path, and the support of the eigenvector is the cycle-path subgraph.  The dimension of $(\mathcal{A}+\mathcal{B})^\perp$ in this case is the first Betti number minus 1. The eigenvectors in  $(\mathcal{A}+\mathcal{B})^\perp$ play an important role in determining the efficiency of search algorithms on finite graphs.

We use the kagome lattice~\cite{Syo51,Sun13,KoShSu} as an example to show the techniques created in this work because this lattice has interesting symmetries. The dynamic of the staggered quantum walk on the kagome lattice, which is an infinite graph, reduces to a 2-tessellable staggered walk on a triangle, which is the quotient graph of the kagome lattice.
% Since there are tree vertices in the intersection of these polygons, the underlying graph of the reduced graph have two vertices with three multiedges. 
The quantum Markov chain is defined on the underlying graph of the quotient graph and is nonreversible. We show how to use the method based on the fundamental cycles to find the eigenvectors in $(\mathcal{A}+\mathcal{B})^\perp$.

This work generalizes Ref.~\cite{Sze04a}, which introduced the notion of Markov chain-based quantum walks, in many aspects: (1)~We address forms of quantum walks that were not addressed in~\cite{Sze04a}. In fact, Szegedy's model is a subset of quantum walks obtained from the set of 2-tessellable quantum walks if we set $\theta=\pi/2$ and employ a graph whose underlying graph does not have multiedges, (2) Ref.~\cite{Sze04a} does not describe the eigenvectors associated with the cycle-path space $(\mathcal{A}+\mathcal{B})^\perp$, and (3) Ref.~\cite{Sze04a} does not describe the quantum detailed balance conditions, which have been proposed for quantum walks in this work, as far as we know.
This work generalizes Ref.~\cite{KIS18} in two directions: (1)~We consider the staggered model with Hamiltonians with a generic $\theta$ while Ref.~\cite{KIS18} addressed only the case $\theta=\pi/2$, and (2)~we obtain a complete eigenbasis while Ref.~\cite{KIS18} missed a subset of eingenvectors corresponding to the space $(\mathcal{A}+\mathcal{B})^\perp$.
Primitive ideas related to the cycle-path space were first introduced in~\cite{Sh} from the point of view of simple random walks on line and para-line graphs. 
After that the ideas are applied to the spectral analysis of twisted random walks~\cite{HS04} and twisted Grover walks~\cite{HKSS14}. 
The twisted Grover walk on the para-line graph is a special case of the model presented in this work, since there is an equivalence between the Grover walk and the 2-tessellable quantum walk~\cite{KPSS18}. Here, we have further developed the concept of cycle-path space, which can now be used in a more general setting.
The structure of this paper is as follows. In Sec.~\ref{Sec:SQW} we define the staggered quantum walk, its evolution operator, and the subspaces spanned by the polygons.
In Sec.~\ref{eigvals} we obtain the eigenvalues of the evolution operator of 2-tessellable quantum walks.
In Sec.~\ref{eigvecs}  we define the quantum detailed balance conditions and obtain an eigenbasis of the evolution operator $U_\theta$ of 2-tessellable quantum walks. This section is divided into three subsections, which address the eigenvectors in the following subspaces: $(\mathcal{A}\cap \mathcal{B})$, $(\mathcal{A}+\mathcal{B})$, and $(\mathcal{A}+\mathcal{B})^\perp$. In Section~\ref{sec:maintheo} we summarize our results in a theorem. In Section~\ref{kagome} we use the kagome lattice as an example.

%%%%%%%%%%%
%%%%%%%%%%%
\section{The 2-tessellable quantum walk}\label{Sec:SQW}
The evolution operator of a staggered quantum walk on a graph $G=(V,E)$ is associated with a graph tessellation cover. A tessellation cover is a set of graph tessellations whose union covers the edge set. The formal definition is as follows~\cite{PSFG16}.
\begin{definition}
A \emph{graph tessellation} $\mathcal{T}_1$ of $G=(V,E)$ is a partition of the vertex set $V$ into cliques. An edge \emph{belongs} to the tessellation $\mathcal{T}_1$ if and only if the edge endpoints belong to the same clique in ${\mathcal{T}}_1$. 
The set of edges belonging to ${\mathcal{T}}_1$ is denoted by ${\mathcal{E}}({\mathcal{T}}_1)$. An element of the tessellation is called a \emph{polygon} (or \emph{tile}). The \emph{size} of a tessellation ${\mathcal{T}}_1$ is the number of polygons in ${\mathcal{T}}_1$. A \emph{tessellation cover} of size~$k$ of $G$ is a set of $k$~tessellations ${\mathcal{T}}_1,...,{\mathcal{T}}_k$ whose union covers the edges, that is, $\cup_{i=1}^k\,{\mathcal{E}}({\mathcal{T}}_i)=E(G)$. If there is a tessellation cover of size at most~$k$, graph~$G$ is called \emph{$k$-tessellable}. 
\end{definition}

A staggered quantum walk on a $k$-tessellable graph is called $k$-tessellable quantum walk. Since this paper addresses $2$-tessellable quantum walks, we assume from now on that $k=2$. A graph is 2-tessellable if and only if its clique graph is 2-colorable~\cite{Por16b}. Besides, it is known that the clique graph of a graph $G$ is 2-colorable if and only if $G$ is the line graph of a bipartite multigraph~\cite{Pet03}. Then, throughout this paper, $G$ is the line graph of a bipartite multigraph on which we now define a 2-tessellable quantum walk.

Suppose that $\{\mathcal{T}_1,\mathcal{T}_2\}$ is a tessellation cover of $G$, where
$\mathcal{T}_1=\{\alpha_1,\alpha_2,\dots ,\alpha_m \}$, $\mathcal{T}_2=\{\beta_1,\beta_2,\dots,\beta_n\}$, where $m=|\mathcal{T}_1|$ and $n=|\mathcal{T}_n|$. Let us assume that $n \ge m$.
Generic polygons of $\mathcal{T}_1$ and $\mathcal{T}_2$ are denoted by $\alpha_i$ and $\beta_j$, respectively. 
Let $\mathcal{H}$ be the Hilbert space $\ell^2(V)$, that is, $\mathcal{H}$ is spanned by the vertices. 
The standard basis of $\mathcal{H}$ is denoted by $\{|u\rangle \;|\; u\in V\}$ which coincides with the delta function on each vertex. 
We assign a complex-valued unit vector in $\mathcal{H}$ to each polygon: 
$|\alpha_1\rangle$, $|\alpha_2\rangle$, $\dots$, $|\alpha_m\rangle$, $|\beta_1\rangle$, $|\beta_2\rangle$, $\dots$, $|\beta_n\rangle$, that is, if $u\notin \alpha_i$, then $\langle  u|\alpha_i \rangle=0$ $(i=1,\dots,m)$, and 
if $v\notin \beta_j$, then $\langle  v|\beta_j \rangle=0$ $(j=1,\dots,n)$. 
For any $u\in V$, let $\mathcal{T}_1(u)\in \mathcal{T}_1$ and $\mathcal{T}_2(u)\in \mathcal{T}_2$ be the polygons that include $u$. A vertex $u$ belongs to exactly two polygons, which we call $\alpha_i$ and $\beta_j$. 
Let $a, b \in \mathcal{H}$ be the functions such that for $u\in \alpha_i \cap \beta_j$, 
\begin{equation}\label{Eq:a_b}
a(u) := \langle u|\alpha_i\rangle; \;b(u) := \langle u|\beta_j\rangle.
\end{equation}
%Let $a_i$ and $b_j$ be restricted versions of vectors of $a$ and $b$ to $\alpha_i$ and $\beta_j$, repectively, such that 
%	\[ a_i(u)=\begin{cases} a(u), & \text{if $u\in\alpha_i$}\\ 0, & \text{otherwise,}\end{cases}\;\;\;\; 
%           b_j(u)=\begin{cases} b(u), & \text{if $u\in\beta_j$}\\ 0, & \text{otherwise.}\end{cases}\]
Let $A: \ell^2(\mathcal{T}_1)\to \ell^2(V)$ and $B: \ell^2(\mathcal{T}_2)\to \ell^2(V)$
be $A=[|\alpha_1\rangle\;|\alpha_2\rangle\;\cdots\;|\alpha_m\rangle]$ and $B=[|\beta_1\rangle\;|\beta_2\rangle\;\cdots\;|\beta_n\rangle]$, that is, 
$(Af)(u)=a(u)f(\mathcal{T}_1(u))$ and $(Bg)(u)=b(u)g(\mathcal{T}_2(u))$ for any $u\in V$, $f\in \ell^2\left(\mathcal{T}_1\right)$, and $g\in \ell^2\left(\mathcal{T}_2\right)$.
Then, $(A^{\dagger}\psi)(\alpha_i)=\langle \alpha_i| \psi\rangle$ and $(B^{\dagger}\psi)(\beta_j)=\langle \beta_j| \psi\rangle$. 

Hamiltonians $H_A$ and $H_B$ associated with tessellations $\mathcal{T}_1$ and $\mathcal{T}_2$, respectively, are defined by
\begin{align}
H_A &=2AA^{\dagger}-I_{\mathcal{H}}, \label{Eqs:H_A_H_B_1}\\
H_B &=2BB^{\dagger}-I_{\mathcal{H}}. \label{Eqs:H_A_H_B_2}
\end{align}  
Note that $AA^{\dagger}$ and $BB^{\dagger}$ are projection operators on 
	\begin{align} 
        \mathcal{A}: &= \mathrm{span}\{|\alpha_i\rangle \;|\; i=1,2,\dots,{m} \}=\left\{Af \;|\;f\in \ell^2\left(\mathcal{T}_1\right)\right\},\label{Eq:projectionA} \\
	\mathcal{B}: &= \mathrm{span}\{|\beta_j\rangle \;|\; j=1,2,\dots,{n} \}=\left\{Bg \;|\;g\in \ell^2\left(\mathcal{T}_2\right)\right\},\label{Eq:projectionB}
        \end{align} 
respectively, and besides
	\begin{equation}\label{Eq:identity} 
        A^{\dagger}A=I_{\ell^2(\mathcal{T}_1)},\;B^{\dagger}B=I_{\ell^2(\mathcal{T}_2)}.
        \end{equation} 
Then, $H_A$ and $H_B$ are self-adjoint unitary operators, that is, $H_A^2=H_B^2=I_{\mathcal{H}}$. 
The evolution operator of a 2-tessellable quantum walk is defined as~\cite{POM17}
\begin{equation}\label{Eq:U_theta}
	U_\theta=-e^{i\theta H_B}e^{i\theta H_A},
\end{equation}	
where $\theta\in (0,\pi)$ is a fixed parameter and a minus sign was added for convenience.
The values $\theta=0$ and $\pi$ are excluded because the walk is trivial in these cases. In the next sections, we address the problem of finding the eigenvalues and eigenvectors of $U_\theta$.

%%%%%%%%%%%
%%%%%%%%%%%
\section{Eigenvalues}\label{eigvals}
%%%%%%%%%%%
%%%%%%%%%%%

Let $U_{\theta}$ be the evolution operator of a 2-tessellable staggered quantum walk on $G$, as described in Section~\ref{Sec:SQW}. 
In this section we obtain the spectrum of $U_\theta$. Let us start with a useful lemma.
\begin{lemma}\label{Lemma_1}
Let
\[ M=I_{{\cal H}}+\frac{b}{a}BB^\dagger , \  N=I_{{\cal H}}-\frac{b}{a+b}BB^\dagger, \]
where $a,b$ are complex numbers. 
Then, $MN=I$ and
\[\det(M)=\left(\frac{a+b}{a}\right)^n.\]
\end{lemma}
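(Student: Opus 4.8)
The plan is to reduce everything to the single structural fact that $BB^\dagger$ is an orthogonal projection. Indeed, from $B^\dagger B=I_{\ell^2(\mathcal{T}_2)}$ in~\eqref{Eq:identity} one gets $(BB^\dagger)^2=B(B^\dagger B)B^\dagger=BB^\dagger$, so, writing $P:=BB^\dagger$, we have an idempotent $P^2=P$ and the whole computation becomes arithmetic in the scalars together with $P$. For the first claim I would simply expand
\[
MN=\Bigl(I+\tfrac{b}{a}P\Bigr)\Bigl(I-\tfrac{b}{a+b}P\Bigr)=I+\Bigl(\tfrac{b}{a}-\tfrac{b}{a+b}-\tfrac{b^2}{a(a+b)}\Bigr)P,
\]
using $P^2=P$, and then check that the scalar coefficient vanishes: multiplying through by $a(a+b)$ gives $b(a+b)-ab-b^2=0$. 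Hence $MN=I$.

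For the determinant I would use Sylvester's determinant identity $\det(I+XY)=\det(I+YX)$ with $X=\tfrac{b}{a}B$ and $Y=B^\dagger$, which turns the $\dim\mathcal{H}$-dimensional determinant into an $n\times n$ one:
\[
\det(M)=\det\Bigl(I_{\ell^2(\mathcal{T}_2)}+\tfrac{b}{a}B^\dagger B\Bigr)=\det\Bigl(\tfrac{a+b}{a}\,I_{\ell^2(\mathcal{T}_2)}\Bigr)=\Bigl(\tfrac{a+b}{a}\Bigr)^{n},
\]
again invoking $B^\dagger B=I_{\ell^2(\mathcal{T}_2)}$ from~\eqref{Eq:identity}. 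Equivalently, and perhaps worth a one-line remark, one can argue spectrally: $M$ acts as the scalar $\tfrac{a+b}{a}$ on the range $\mathcal{B}$ of $P$ and as the identity on $\mathcal{B}^\perp=\ker P$, and $\dim\mathcal{B}=\mathrm{rank}(BB^\dagger)=\mathrm{rank}(B)=n$ because $B$ has full column rank $n$; so $\det(M)=(\tfrac{a+b}{a})^{n}\cdot 1^{\dim\mathcal{B}^\perp}$.

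There is no real obstacle here; the only points deserving a word are that $a\neq 0$ and $a+b\neq 0$ so that $M$ and $N$ are well defined — implicit in the statement, and satisfied in every application, where $a,b$ are nonzero polygon amplitudes — and that, if $\mathcal{H}=\ell^2(V)$ is taken infinite-dimensional, $\det$ should be read as the Fredholm determinant of $I$ plus the finite-rank operator $\tfrac{b}{a}BB^\dagger$, for which Sylvester's identity still holds verbatim; in the finite-dimensional reductions used later in the paper this subtlety does not arise.
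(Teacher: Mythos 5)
Your proof is correct. The paper actually states Lemma~1 without proof, so there is nothing to compare against line by line; but your argument uses exactly the ingredients the paper has on hand — the identity $B^\dagger B=I_{\ell^2(\mathcal{T}_2)}$ from~(2.6), hence $(BB^\dagger)^2=BB^\dagger$, and the Sylvester-type identity $\det(I_{m_1}-M_1M_2)=\det(I_{m_2}-M_2M_1)$ that the authors explicitly invoke immediately after the lemma for the proof of Theorem~1 — so it is clearly the intended argument, and both the expansion giving $MN=I$ and the reduction of $\det(M)$ to an $n\times n$ determinant check out. One small correction to your closing remark: in the actual application inside the proof of Theorem~1 the scalars are $a=1+u e^{-2i\theta}$ and $b=2iu\sin\theta\, e^{-i\theta}$, not the polygon amplitudes $a(u),b(u)$ of Eq.~(2.1) (an unfortunate notation clash in the paper); there $a$ can vanish for particular $u$, but since both sides of the lemma's conclusion are used inside a polynomial identity in $u$, the conclusion extends by continuity, so nothing is lost.
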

Besides Lemma~\ref{Lemma_1}, the proof of the next theorem uses the fact
\[
\det ( {I}_{m_1} - {M_1} {M_2} )= \det ( {I}_{m_2} - {M_2} {M_1} ),
\] 
where $M_1$ and $M_2$ are $m_1\times m_2$ and $m_2\times m_1$ matrices, respectively.
\begin{theorem}\label{theo:minpoly}
The characteristic polynomial of $U_\theta$ is
\[
\det(\lambda I_{{\cal H}} - U_\theta)
=(\lambda+1)^{n-m}\left(\lambda+e^{-2i\theta}\right)^{\nu-m-n}\det\left((\lambda+1)^2 I_m-4\lambda \sin^2(\theta) A^\dagger B B^\dagger A\right),
\]
where $\nu=|V|$.
\end{theorem}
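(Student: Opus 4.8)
The plan is to reduce the $\nu\times\nu$ characteristic determinant to an $m\times m$ one by exhibiting $\lambda I_{\mathcal{H}}-U_\theta$ as a scalar $(\lambda+e^{-2i\theta})I_{\mathcal{H}}$ plus a low-rank term factored through the rectangular isometries $A$ and $B$. Since $H_A$ and $H_B$ are involutions, one has $e^{i\theta H_A}=e^{-i\theta}I_{\mathcal{H}}+2i\sin\theta\,AA^{\dagger}$ and $e^{i\theta H_B}=e^{-i\theta}I_{\mathcal{H}}+2i\sin\theta\,BB^{\dagger}$, so writing $P:=AA^{\dagger}$, $Q:=BB^{\dagger}$ and expanding $U_\theta=-e^{i\theta H_B}e^{i\theta H_A}$ gives
\[
\lambda I_{\mathcal{H}}-U_\theta=(\lambda+e^{-2i\theta})\,I_{\mathcal{H}}+2i\sin\theta\,e^{-i\theta}(P+Q)-4\sin^2\theta\,QP .
\]
Let $X:=[\,A\ \ B\,]\colon\ell^2(\mathcal{T}_1)\oplus\ell^2(\mathcal{T}_2)\to\mathcal{H}$, so that $XX^{\dagger}=P+Q$, while by $A^{\dagger}A=I_m$, $B^{\dagger}B=I_n$ the Gram operator $X^{\dagger}X$ is the $2\times2$ block matrix with diagonal blocks $I_m,I_n$ and off-diagonal blocks $A^{\dagger}B,B^{\dagger}A$. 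With
\[
C:=\begin{pmatrix}2i\sin\theta\,e^{-i\theta}\,I_m & 0\\ -4\sin^2\theta\,B^{\dagger}A & 2i\sin\theta\,e^{-i\theta}\,I_n\end{pmatrix},
\]
a one-line block computation shows $XCX^{\dagger}=2i\sin\theta\,e^{-i\theta}(P+Q)-4\sin^2\theta\,QP$ (the lower-left block of $C$ is precisely what reproduces $QP=BB^{\dagger}AA^{\dagger}$), so $\lambda I_{\mathcal{H}}-U_\theta=(\lambda+e^{-2i\theta})I_{\mathcal{H}}+XCX^{\dagger}$.

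Next I would apply the determinant identity quoted before the theorem, in the equivalent form $\det(I+XZ)=\det(I+ZX)$, with $Z:=(\lambda+e^{-2i\theta})^{-1}CX^{\dagger}$: factoring out $(\lambda+e^{-2i\theta})I_{\mathcal{H}}$ and then swapping $X$ past $Z$ gives
\[
\det(\lambda I_{\mathcal{H}}-U_\theta)=(\lambda+e^{-2i\theta})^{\,\nu-m-n}\,\det\!\big((\lambda+e^{-2i\theta})I_{m+n}+C\,X^{\dagger}X\big),
\]
which already produces the factor $(\lambda+e^{-2i\theta})^{\nu-m-n}$. It then remains to evaluate $\det D$, where $D:=(\lambda+e^{-2i\theta})I_{m+n}+CX^{\dagger}X$ is an $(m+n)\times(m+n)$ matrix.

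Finally, set $\gamma:=2i\sin\theta\,e^{-i\theta}$ and $\delta:=4\sin^2\theta$; expanding $D$ as a $2\times2$ block matrix and using the identity $e^{-2i\theta}+\gamma=1$ to simplify the $(1,1)$-block yields
\[
D=\begin{pmatrix}(\lambda+1)I_m & \gamma A^{\dagger}B\\ (\gamma-\delta)B^{\dagger}A & (\lambda+1)I_n-\delta B^{\dagger}AA^{\dagger}B\end{pmatrix}.
\]
For $\lambda\neq-1$ the top-left block is invertible, and its Schur complement equals $(\lambda+1)I_n-\big(\delta+\tfrac{\gamma(\gamma-\delta)}{\lambda+1}\big)B^{\dagger}AA^{\dagger}B$; the identity $\gamma(\gamma-\delta)=-\delta$ collapses the scalar coefficient to $\tfrac{\lambda\delta}{\lambda+1}$, whence $\det D=(\lambda+1)^{m}\det\!\big((\lambda+1)I_n-\tfrac{4\lambda\sin^2\theta}{\lambda+1}B^{\dagger}AA^{\dagger}B\big)$. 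Applying the determinant identity once more in the form $\det(zI_n-WW^{\dagger})=z^{\,n-m}\det(zI_m-W^{\dagger}W)$ with $W=B^{\dagger}A$ and $z=\lambda+1$, and then clearing the denominator $(\lambda+1)$, turns this into $\det D=(\lambda+1)^{\,n-m}\det\!\big((\lambda+1)^2I_m-4\lambda\sin^2\theta\,A^{\dagger}BB^{\dagger}A\big)$; multiplying by the prefactor $(\lambda+e^{-2i\theta})^{\nu-m-n}$ gives the theorem.

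All of the determinant manipulations above are valid for $\lambda\notin\{-1,-e^{-2i\theta}\}$, but since both sides are polynomials in $\lambda$ agreeing on a cofinite set they are equal identically, so nothing is lost at those two values. I expect the only genuine obstacle to be organizational: the weights in the middle matrix $C$ must be chosen so that precisely the cancellations $e^{-2i\theta}+\gamma=1$ and $\gamma(\gamma-\delta)=-\delta$ occur, since an equally natural but differently weighted $C$ leaves a Schur complement that does not telescope. As an alternative to introducing $C$ by hand, one may first peel off $e^{i\theta H_B}=e^{-i\theta}M$ using Lemma~\ref{Lemma_1}, with $M=I_{\mathcal{H}}+(e^{2i\theta}-1)BB^{\dagger}$ and $M^{-1}=N$, before invoking the swap identity; the bookkeeping is essentially the same.
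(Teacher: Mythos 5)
Your proof is correct, but it takes a genuinely different route from the paper's. The paper substitutes $u=1/\lambda$, expands $\det(I-uU_\theta)$ inside $\mathcal{H}$, explicitly inverts the rank-$n$ perturbation $I+\frac{b}{a}BB^\dagger$ via Lemma~\ref{Lemma_1}, and then performs a single Sylvester swap with $A$ to land directly on an $m\times m$ determinant. You instead bundle both projectors into the single rectangular factor $X=[A\;B]$, write $\lambda I_{\mathcal{H}}-U_\theta=(\lambda+e^{-2i\theta})I_{\mathcal{H}}+XCX^\dagger$, swap once to get an $(m+n)\times(m+n)$ determinant, and finish with a Schur complement and a second swap; Lemma~\ref{Lemma_1} is never needed. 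I checked the two scalar identities on which your cancellations rest, $e^{-2i\theta}+\gamma=1$ and $\gamma(\gamma-\delta)=-\delta$ (the latter following from the former via $\gamma^2=-\delta e^{-2i\theta}$), and they hold, as does the block computation $XCX^\dagger=\gamma(P+Q)-\delta QP$. What your route buys is transparency: your intermediate matrix $D=(\lambda+e^{-2i\theta})I_{m+n}+CX^\dagger X$ is exactly $\lambda I_{m+n}-\Lambda_\theta$ for the operator $\Lambda_\theta$ of Lemma~\ref{lem:masterEq} (note $\gamma-\delta=2ie^{i\theta}\sin\theta$), so your proof simultaneously establishes $\det(\lambda I_{\mathcal{H}}-U_\theta)=(\lambda+e^{-2i\theta})^{\nu-m-n}\det(\lambda I_{m+n}-\Lambda_\theta)$, a relation the paper only exploits later in Section~\ref{eigvecs}. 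One small imprecision: when $\nu<m+n$ (which can occur, e.g.\ for $P_3$ as noted after Corollary~\ref{cor:eigenvalues}) the right-hand side is a rational function rather than a polynomial, so your closing sentence should appeal to equality of rational functions agreeing on a cofinite set; the argument is unaffected.
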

\begin{proof}
Using equation~(\ref{Eq:U_theta}), we have
\[  
\det ( I_{{\cal H}} -u \ U_{\theta } ) = \det ( I_{{\cal H}} +u  e^{i \theta H_B} e^{i \theta H_A} ), 
\]
where $|u|=1$. Commuting the order of the matrices inside the determinant, using that $ e^{i \theta H_A}=\cos \theta I_{{\cal H}} +i \sin \theta H_A$, $ e^{i \theta H_B}=\cos \theta I_{{\cal H}} +i \sin \theta H_B$, equations~(\ref{Eqs:H_A_H_B_1}) and~(\ref{Eqs:H_A_H_B_2}), we obtain
\begin{align*}
\det ( I_{{\cal H}} -u \ U_{\theta } ) = \det \Big((1+u e^{-2i \theta } ) I_{{\cal H}} +2iu \sin(\theta) e^{-i \theta } BB^\dagger - \\
2u \sin^2 (\theta) AA^\dagger \big(2BB^\dagger-(1+i \cot \theta ) I_{{\cal H}} \big)\Big).
\end{align*}
Factoring out the term $(1+u e^{-2i \theta } )$ and using that the determinant of a matrix product of square matrices equals the product of their determinants, we obtain
\begin{align*}
\det ( I_{{\cal H}} -u \ U_{\theta } ) & = a^{\nu } \det \Big( I_{{\cal H}} - \frac{c}{a} AA^\dagger \big(2BB^\dagger-(1+i \cot \theta ) I_{{\cal H}} \big)
\big(I_{{\cal H}} + \frac{b}{a} BB^\dagger\big)^{-1} \Big)\times \\
& \det \Big( I_{{\cal H}} + \frac{b}{a} BB^\dagger\Big) ,
\end{align*}
where $a=1+u e^{-2i \theta }$, $b=2iu \sin \theta e^{-i \theta }$, and $c=2u \sin {}^2 \theta$.

By Lemma~\ref{Lemma_1}, we have 
\[
\det \Big( {I}_{{\cal H}} + \frac{b}{a} BB^\dagger \Big)= \frac{(a+b)^n }{a^n } . 
\]
and 
\[
\Big( {I}_{{\cal H}} - \frac{b}{a} BB^\dagger \Big)^{-1} = {I}_{{\cal H}} - \frac{b}{a+b} BB^\dagger . \\ 
\]
Using these identities and equation (\ref{Eq:identity}), we obtain 
\begin{align*}
\det ( I_{{\cal H}} -u \ U_{\theta } ) &  = a^{ \nu -n} (a+b)^n\times \\
& \det \Big( I_{{\cal H}} - \frac{c}{a} AA^\dagger 
\Big(\frac{2a+b+ib \cot \theta }{a+b} BB^\dagger  -  
(1+i \cot \theta ) I_{{\cal H}} \Big)\Big).
\end{align*}
Commuting the order of the matrices inside the determinant and factoring out the denominator, we obtain
\begin{align*}
\det ( I_{{\cal H}} -u \ U_{\theta } ) & = a^{ \nu -m-n} (a+b)^{n-m}\times \\
& \det \Big((a+c+ic \cot \theta )(a+b) I_m -  
c(2a+b+ib \cot \theta ) A^\dagger BB^\dagger A \Big) . 
\end{align*}
Using that $a+b= a+c+ic \cot \theta = 1+u$ and $2a+b+ib \cot \theta = 1$, we obtain
\begin{align*}
\det ( I_{{\cal H}} -u \ U_{\theta } ) & =(1+u e^{-2i \theta } )^{ \nu -m-n} (1+u)^{n-m}\times \\
& \det \Big((1+u)^2 I_{m} - 4u \sin^2(\theta) A^\dagger BB^\dagger A\Big) .
\end{align*}
Setting $u=1/ \lambda$ in the above equation, we obtain the characteristic polynomial of $U_\theta$ in terms of the characteristic polynomial of $A^\dagger BB^\dagger A$.
\end{proof}

The spectrum of $U_{\theta}$ is obtained from the solutions $\lambda$ of the equation $\det(\lambda I_{{\cal H}} - U_\theta)=0$. We need the next lemma before describing the spectrum of $U_{\theta}$.
\begin{lemma}\label{lem:bipartite}
Let $T$ be the following $(n+m)\times (n+m)$ matrix: 
	\[  T=\begin{bmatrix} 0 & A^{\dagger} B \\  B^{\dagger} A & 0 \end{bmatrix}. \]
Then,  
\[\det(\mu I-T)=\mu^{n-m}\det(\mu^2 I-A^{\dagger}BB^{\dagger}A).\] 
\end{lemma}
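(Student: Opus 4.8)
The plan is to evaluate $\det(\mu I - T)$ directly by means of the Schur complement for $2\times 2$ block matrices. Recalling the block sizes fixed in Section~\ref{Sec:SQW}, $A$ is $\nu\times m$ and $B$ is $\nu\times n$, so $A^{\dagger}B$ is $m\times n$ and $B^{\dagger}A$ is $n\times m$; consequently
\[
\mu I_{n+m}-T=\begin{bmatrix}\mu I_m & -A^{\dagger}B\\[2pt] -B^{\dagger}A & \mu I_n\end{bmatrix},
\]
and the bottom-right block $\mu I_n$ is invertible exactly when $\mu\neq 0$.

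For $\mu\neq 0$ I would apply the standard identity $\det\begin{bmatrix}X & Y\\ Z & W\end{bmatrix}=\det(W)\,\det\!\big(X-YW^{-1}Z\big)$ with $W=\mu I_n$, which yields
\[
\det(\mu I - T)=\mu^{\,n}\,\det\!\Big(\mu I_m-\tfrac1\mu\,A^{\dagger}B B^{\dagger}A\Big)=\mu^{\,n}\cdot\mu^{-m}\,\det\!\big(\mu^2 I_m-A^{\dagger}B B^{\dagger}A\big),
\]
where in the last step I factored $\tfrac1\mu$ out of each of the $m$ rows of the $m\times m$ determinant. This is exactly the asserted formula, valid for all $\mu\neq 0$.

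It then remains to remove the restriction $\mu\neq 0$. Since $n\ge m$ was assumed, $\mu^{n-m}$ is a genuine polynomial, so both sides of the claimed identity are polynomials in $\mu$; having shown they agree on the infinite set $\mathbb{C}\setminus\{0\}$, they agree identically. There is essentially no obstacle here: the only point requiring care is that the Schur-complement step needs $\mu I_n$ invertible, which is precisely why the closing polynomial-identity remark is included. Alternatively one could simply note that when $\mu=0$ both sides vanish (the matrix $T$ has a nontrivial kernel because $n\ge m$ forces $\operatorname{rank}(B^{\dagger}A)\le m<n+m$), so no limiting argument is strictly necessary; the polynomial argument is just the cleanest way to package this.
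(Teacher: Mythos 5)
Your proof is correct. The paper states Lemma~\ref{lem:bipartite} without giving any proof, so there is no argument to compare against; your Schur-complement computation
\[
\det(\mu I-T)=\det(\mu I_n)\,\det\!\Big(\mu I_m-\tfrac{1}{\mu}A^{\dagger}BB^{\dagger}A\Big)=\mu^{n-m}\det\!\big(\mu^{2}I_m-A^{\dagger}BB^{\dagger}A\big)
\]
for $\mu\neq 0$, followed by the observation that both sides are polynomials in $\mu$ (using $n\ge m$) and hence agree everywhere, is the standard and complete argument. It is also consonant with the tools the paper invokes elsewhere (the identity $\det(I_{m_1}-M_1M_2)=\det(I_{m_2}-M_2M_1)$ quoted before Theorem~\ref{theo:minpoly} is the same Schur-complement fact in disguise). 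One small caveat on your optional aside: the claim that at $\mu=0$ ``both sides vanish because $\operatorname{rank}(B^{\dagger}A)\le m<n+m$'' only forces $\operatorname{rank}(T)\le 2m<n+m$ when $n>m$; in the case $n=m$ the matrix $T$ need not be singular and neither side need vanish at $\mu=0$, so that shortcut is not available in general. Since you explicitly rest the proof on the polynomial-identity argument rather than on this aside, the proof itself is unaffected.
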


The next corollary describes the spectrum of $U_{\theta}$ in terms of the spectrum of $T$.
 
\begin{corollary}\label{cor:eigenvalues} 
The spectrum of $U_\theta$ is
	\[ \sigma(U_\theta)=\left\{e^{2i\phi} \;\bigg|\; \left(e^{2i\phi}+e^{-2i\theta}\right)^{\nu-m-n} \det\left(\frac{\cos\phi}{\sin\theta}  I_{n+m}-T \right)=0\right\}. \]

\end{corollary}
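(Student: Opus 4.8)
The plan is to read the spectrum off directly from the characteristic polynomial of Theorem~\ref{theo:minpoly}, after a change of variable that turns the $m\times m$ determinant appearing there into $\det(\mu I_{n+m}-T)$ by means of Lemma~\ref{lem:bipartite}. Since $H_A$ and $H_B$ are self-adjoint, $e^{i\theta H_A}$ and $e^{i\theta H_B}$ are unitary, hence so is $U_\theta=-e^{i\theta H_B}e^{i\theta H_A}$; therefore $\sigma(U_\theta)$ lies on the unit circle and every eigenvalue can be written $\lambda=e^{2i\phi}$. For such a $\lambda$ one has $\lambda+1=2e^{i\phi}\cos\phi$, and hence the key identity $(\lambda+1)^2=4\lambda\cos^2\phi$, which lets me factor
\[
(\lambda+1)^2 I_m-4\lambda\sin^2(\theta)\,A^\dagger BB^\dagger A=4\lambda\sin^2(\theta)\left(\mu^2 I_m-A^\dagger BB^\dagger A\right),\qquad \mu:=\frac{\cos\phi}{\sin\theta}.
\]

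Next I would take determinants: the factor above contributes $(4\lambda\sin^2\theta)^m\det(\mu^2 I_m-A^\dagger BB^\dagger A)$, and Lemma~\ref{lem:bipartite} applied with $\mu$ in place of its variable (and $\mu\neq0$) replaces $\det(\mu^2 I_m-A^\dagger BB^\dagger A)$ by $\mu^{m-n}\det(\mu I_{n+m}-T)$. The remaining bookkeeping is to combine $\mu^{m-n}=(\sin\theta/\cos\phi)^{n-m}$ with the prefactor $(\lambda+1)^{n-m}=(2e^{i\phi}\cos\phi)^{n-m}$ coming from Theorem~\ref{theo:minpoly}; the powers of $\cos\phi$ cancel, and one is left with
\[
\det(\lambda I_{{\cal H}}-U_\theta)=(2e^{i\phi}\sin\theta)^{n-m}(4\lambda\sin^2\theta)^m\left(\lambda+e^{-2i\theta}\right)^{\nu-m-n}\det\left(\frac{\cos\phi}{\sin\theta}I_{n+m}-T\right).
\]
Because $\theta\in(0,\pi)$ forces $\sin\theta\neq0$ and $\lambda=e^{2i\phi}\neq0$, the prefactors $(2e^{i\phi}\sin\theta)^{n-m}$ and $(4\lambda\sin^2\theta)^m$ are nonzero, so $\lambda=e^{2i\phi}\in\sigma(U_\theta)$ exactly when $\bigl(e^{2i\phi}+e^{-2i\theta}\bigr)^{\nu-m-n}\det\bigl(\tfrac{\cos\phi}{\sin\theta}I_{n+m}-T\bigr)=0$, which is the asserted description of $\sigma(U_\theta)$.

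I do not expect a serious obstacle: the substance lies in Theorem~\ref{theo:minpoly} and Lemma~\ref{lem:bipartite}, and this corollary is their mechanical combination. The two points deserving care are (i) that the cancellation of $\cos\phi$ is a priori only legitimate for $\cos\phi\neq0$, i.e. $\lambda\neq-1$, so the value $\lambda=-1$ must be checked separately --- either by observing that both sides of the displayed identity are trigonometric polynomials in $\phi$, so the identity persists by continuity, or by comparing $\det(-I_{{\cal H}}-U_\theta)$ from Theorem~\ref{theo:minpoly} with $\det(0\cdot I_{n+m}-T)$ from Lemma~\ref{lem:bipartite} directly; and (ii) that $\nu-m-n$ can be as small as $-1$ (for instance when the underlying multigraph is a tree), in which case the expression of Theorem~\ref{theo:minpoly} and the display above are rational-function identities whose right-hand sides are nonetheless genuine polynomials, so the extraction of the zero set is unaffected.
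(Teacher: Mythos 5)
Your proposal is correct and follows essentially the same route as the paper: substitute $\lambda=e^{2i\phi}$ into the characteristic polynomial of Theorem~\ref{theo:minpoly}, use $(\lambda+1)^2/(4\lambda)=\cos^2\phi$ to introduce $\mu=\cos\phi/\sin\theta$, and convert the $m\times m$ determinant into $\det(\mu I_{n+m}-T)$ via Lemma~\ref{lem:bipartite}. Your extra care with the nonzero prefactors and the $\cos\phi=0$ case only makes explicit what the paper leaves implicit.
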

\begin{proof}
Using Theorem~\ref{theo:minpoly} and factoring out $4\lambda \sin^2(\theta)$, the spectrum of $U_{\theta}$ is obtained from equation
\begin{align*}
(\lambda+1)^{n-m}\left(\lambda+e^{-2i\theta}\right)^{\nu-m-n}\det\left(\frac{(\lambda+1)^2}{4\lambda \sin^2(\theta)} I_m- A^\dagger B B^\dagger A\right)=0.
\end{align*}
Setting $\lambda=e^{2 i\phi}$ and $\cos\phi/\sin\theta=\mu$, using the fact that $(\lambda+1)^2/(4\lambda)=\cos^2\phi$, we obtain
\begin{align*}
\left(e^{2i\phi}+e^{-2i\theta}\right)^{\nu-m-n}\mu^{n-m}\det(\mu^2 I-A^{\dagger}BB^{\dagger}A)=0.
\end{align*}
Using Lemma~\ref{lem:bipartite}, we obtain $\sigma(U_\theta)$. 
\end{proof}

Lemma~\ref{lem:bipartite} implies that $\dim(\ker(T))\geq n-m$ and, since $\cos\phi=\mu\sin\theta$, the 0-eigenvalues of $T$ are associated with the $(-1)$-eigenvalues of $U_\theta$. Besides, the $(+1)$-eigenvalues of $T$ are associated with the $(-e^{-2i \theta })$-eigenvalues of $U_\theta$.  Summarizing, the spectrum of $U_\theta$ can be described as follows (see Fig.~\ref{fig:mapping}): 
\begin{enumerate} 
\item \label{eq:062712} $\lambda=-e^{-2i \theta }$ with multiplicity at least $\max\{\nu-m-n,0\}$. 
\item \label{eq:062711} There are $\max\{n+m,\nu\}$ eigenvalues $\lambda = e^{2i\phi}$, where $\cos\phi= \mu\sin\theta$ and
\[ \mu\in \begin{cases} \sigma(T) & \text{if $n+m\leq \nu$,}\\ 
\sigma(T)\setminus\{1\} & \text{if $n+m> \nu$,} \end{cases} \]
and, in particular, $\lambda=-1$ with multiplicity at least $n-m$.
\end{enumerate} 
The multiplicity of $(-e^{-2i\theta})$ depends on the reversibility of $T$ (see Theorem~\ref{thm:main}). 
Note that $\nu-m-n<0$ if and only if $\nu=m+n-1$. Take 
for example $G=P_3$ ($P_3=\bullet$-----$\bullet$-----$\bullet$) and $V(G)=\{1,2,3\}$ with $\mathcal{T}_1=\{\{1,2\},\{3\}\}$, $\mathcal{T}_2=\{\{1\},\{2,3\}\}$, which has $\nu=3$ and $m=n=2$.

\begin{figure}
\begin{center}
	\includegraphics[width=90mm]{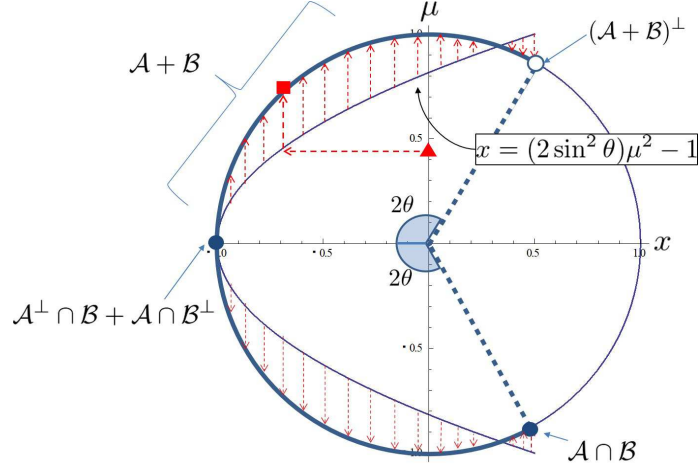}
\end{center}
\caption{The mapping from $\mu\in \sigma(T)$ to $2\phi\in \sigma(U|_{\mathcal{A+B}})$: 
The red triangle {\small\textcolor{red}{$\blacktriangle$}} on the $\mu$-axis is mapped to the red square {\scriptsize\textcolor{red}{$\blacksquare$}} on the unit circle in the complex plane. 
The blue curve depicted in bold face on the unit circle corresponds to $\mathcal{A}+\mathcal{B}$. 
The empty point at angle $(\pi-2\theta)$ corresponds to $(\mathcal{A}+\mathcal{B})^\perp$ 
and the full point at angle $(\pi+2\theta)$ corresponds to $\mathcal{A}\cap \mathcal{B}$. 
The point at angle $\pi$ mapped from $0\in \sigma(T)$ corresponds to $\mathcal{A}^\perp\cap \mathcal{B}+\mathcal{A}\cap \mathcal{B}^\perp$. 
}
\label{fig:mapping}
\end{figure}

%%%%%%%%%%%
%%%%%%%%%%%
\section{Eigenvectors}\label{eigvecs}
%%%%%%%%%%%
%%%%%%%%%%%

%
In the previous section, we have shown that the self-adjoint matrix $T$ plays a key role in the description of the spectrum of $U_\theta$. 
In this section, we discuss some extra properties of $T$ in order to obtain the eigenvectors of $U_\theta$. 
Let $\mathcal{K}$ be the Hilbert spanned by $\mathcal{T}_1$ and $\mathcal{T}_2$, that is, 
$\mathcal{K}:=\{ \psi: \mathcal{T}_1 \sqcup \mathcal{T}_2\to \mathbb{C} \;|\; ||\psi||^2<\infty\}$. 
We employ the standard inner product.

As in Lemma~\ref{lem:bipartite}, we write $T: \mathcal{K}\to \mathcal{K}$ as
	\[ T = \begin{bmatrix} 0 & T_{AB} \\ T_{BA} & 0 \end{bmatrix}, \]
where $T_{AB}=A^{\dagger}B$ and $T_{BA}=T_{AB}^{\dagger}$. 
The entries of $T$ are given by $(T_{AB})_{i,j}=\langle \alpha_i|\beta_j \rangle$. 
Now we define the notions of reversible eigenfunction and quantum detailed balance (QDB) for a pair $(a,b)$, where $a$ and $b$ are given by Eq.~(\ref{Eq:a_b}). 
\begin{definition}\label{def:DBC}
The pair $(a,b)$ obeys the quantum detailed balance conditions if there exists an eigenfunction $\pi$ of $T$ such that 
	\[ a(u)\pi(\mathcal{T}_1(u))=b(u)\pi(\mathcal{T}_2(u)),\]
for every $u\in V(G)$. 
We call this function $\pi$ a reversible eigenfunction.
\end{definition}

We say that $T$ is reversible or $T$ has a quantum detailed balance if there is a pair $(a,b)$ that obeys the QDB conditions.
A useful property of the spectrum of $T$ is as follows. 
\begin{lemma} The spectrum of $T$ obeys
$\sigma(T)\subseteq [-1,1]$. 
\end{lemma}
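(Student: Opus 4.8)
The plan is to bound the operator norm of $T$ by $1$ and then invoke the fact that $T$ is self-adjoint, so that $\sigma(T)\subseteq[-\|T\|,\|T\|]$. The only input needed is equation~(\ref{Eq:identity}), namely $A^{\dagger}A=I_{\ell^2(\mathcal{T}_1)}$ and $B^{\dagger}B=I_{\ell^2(\mathcal{T}_2)}$, which says that $A$ and $B$ are isometries.

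First I would record that an isometry has norm $1$: since $\|A\|^2=\|A^{\dagger}A\|=\|I_{\ell^2(\mathcal{T}_1)}\|=1$, we get $\|A\|=\|A^{\dagger}\|=1$, and likewise $\|B\|=\|B^{\dagger}\|=1$. Next, observe that for the block anti-diagonal operator $T$ one has
\[
T^2=\begin{bmatrix} A^{\dagger}BB^{\dagger}A & 0 \\ 0 & B^{\dagger}AA^{\dagger}B \end{bmatrix},
\]
so $\|T\|^2=\|T^2\|=\max\{\|A^{\dagger}BB^{\dagger}A\|,\|B^{\dagger}AA^{\dagger}B\|\}$. Since $\|A^{\dagger}BB^{\dagger}A\|\le\|A^{\dagger}\|\,\|B\|\,\|B^{\dagger}\|\,\|A\|=1$ and symmetrically for the other block, we conclude $\|T\|\le 1$. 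Because $T=T^{\dagger}$ is bounded, its spectrum is real and its spectral radius equals $\|T\|$, hence $\sigma(T)\subseteq[-1,1]$.

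As an alternative more in the spirit of the rest of the paper (useful if one prefers to argue at the level of eigenvectors in the finite-dimensional setting), one can note that $BB^{\dagger}$ is an orthogonal projection, so for any $f\in\ell^2(\mathcal{T}_1)$,
\[
0\le\langle f,A^{\dagger}BB^{\dagger}Af\rangle=\|B^{\dagger}Af\|^2\le\|Af\|^2=\langle f,A^{\dagger}Af\rangle=\|f\|^2,
\]
so $A^{\dagger}BB^{\dagger}A$ is positive semidefinite with norm at most $1$, giving $\sigma(A^{\dagger}BB^{\dagger}A)\subseteq[0,1]$. If $\mu\in\sigma(T)$ has eigenvector $(f,g)^{\top}$, then $A^{\dagger}Bg=\mu f$ and $B^{\dagger}Af=\mu g$, hence $A^{\dagger}BB^{\dagger}Af=\mu^2 f$; if $f\neq 0$ this forces $\mu^2\in\sigma(A^{\dagger}BB^{\dagger}A)\subseteq[0,1]$, while if $f=0$ then $g\neq 0$ and $\mu g=B^{\dagger}Af=0$ forces $\mu=0$. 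Either way $\mu\in[-1,1]$ since $\mu$ is real. I do not expect a genuine obstacle here; the only point requiring a little care is to phrase the bound in terms of $\|T\|$ so that the conclusion holds for the full spectrum, not merely the eigenvalues, in case $\mathcal{H}$ is infinite-dimensional.
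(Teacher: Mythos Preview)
Your proposal is correct. The paper's own proof is essentially your ``alternative'' argument: it takes an eigenvector $f\oplus g\in\ker(\lambda I-T)$, observes $T_{AB}T_{BA}f=\lambda^2 f$, and then bounds
\[
|\lambda^2|^2\|f\|^2=\|A^{\dagger}BB^{\dagger}Af\|^2\le\langle B^{\dagger}Af,B^{\dagger}Af\rangle=\langle Af,BB^{\dagger}Af\rangle\le\langle Af,Af\rangle=\|f\|^2,
\]
using that $AA^{\dagger}$ and $BB^{\dagger}$ are projections. Your primary route via $\|T\|\le 1$ is a cleaner packaging of the same estimates: it avoids the eigenvector bookkeeping (in particular the $f=0$ case, which the paper glosses over and you handle explicitly) and yields the spectral inclusion directly from the spectral radius formula, which also covers the infinite-dimensional situation without appeal to the existence of eigenvectors. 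The two arguments use the same ingredients (the isometry identities~(\ref{Eq:identity}) and the projection bounds), so the difference is organizational rather than conceptual.
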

\begin{proof}
Let $f\oplus g\in \ker(\lambda I-T)$. 
Then,
	\begin{align*}
         |\lambda^2|^2\, ||f||^2 
         	&= ||T_{AB}T_{BA}f||^2\leq\langle B^{\dagger}Af, B^{\dagger}Af \rangle \\
        	&= \langle Af, BB^{\dagger}Af \rangle \\
                &\leq \langle Af, Af \rangle \\
                &= ||f||^2.
	\end{align*}
Since $T$ is self-adjoint and $\lambda^2\le 1$, the result follows.
\end{proof}

Define an underlying bipartite multigraph $G_{\textrm{un}}=(V_{\textrm{un}},E_{\textrm{un}})$ whose vertex set is $V_{\textrm{un}} = \mathcal{T}_1 \sqcup \mathcal{T}_2$ and two vertices are adjacent if and only if $|\alpha \cap \beta|>0$ for $\alpha\in \mathcal{T}_1$ and $\beta\in \mathcal{T}_2$ and the number of multiple edges is given by $|\alpha \cap \beta|$. The adjacency matrix of $G_{\textrm{un}}$ is obtained in the following way. Let $A'$ and $B'$ be the matrices obtained from $A$ and $B$ by replacing the nonzero entries by 1, respectively. The adjacency matrix is 
\[T'=\begin{bmatrix} 0 & (A')^TB' \\ (B')^TA' & 0 \end{bmatrix}.\] 
The entries of $T'$ are nonnegative integers and $T'_{ij}$ is the number of multiedges linking vertices $\alpha_i$ and $\beta_j$ of $G_{\textrm{un}}$.  Note that (1)~$G_{\textrm{un}}$ is an intersection multigraph whose family of sets are the polygons of the tessellations $\mathcal{T}_1$ and $\mathcal{T}_2$, and (2)~$G_{\textrm{un}}$ is a root multigraph of $G$, that is, the line graph of $G_{\textrm{un}}$ is $G$ and there is a one-to-one correspondence between $E_{\textrm{un}}$ and $V(G)$.

In order to find the eigenvectors of $U_\theta$, we decompose the total state space as $\mathcal{H}=(\mathcal{A}+\mathcal{B}) \oplus (\mathcal{A}+\mathcal{B})^\perp$. In the next subsection, we address the subspace $\mathcal{A} \cap \mathcal{B}\subset \mathcal{A}+\mathcal{B}$, which is the one most amenable in terms of algebraic manipulations. In Subsection~\ref{subsec:Inherited}, we obtain the eigenvectors in subspace $(\mathcal{A}+\mathcal{B})$, and in Subsection~\ref{subsec:Cycle-path} obtain the eigenvectors in subspace $(\mathcal{A}+\mathcal{B})^\perp=\mathcal{A}^\perp \cap \mathcal{B}^{\perp}$.

%%%%%%%%%%%%%%
\subsection{Space $\mathcal{A} \cap \mathcal{B}$}\label{subsec:AcapB}

The following lemma shows a useful necessary and sufficient condition that the spectrum of $T$ must obey in order to include eigenvalues $\pm 1$, which is important for obtaining the eigenvectors of $U_\theta$. Besides, in the proof of this lemma, we describe a classical Markov chain induced by the quantum chain and how to obtain the classical detailed balance conditions from the quantum detailed balance conditions.
\begin{lemma}\label{lem:DBC} 
$T$ is reversible with a reversible eigenfunction $\pi=\pi_1\oplus \pi_2$
if and only if $\dim(\ker(I-T))= \dim(\ker(I+T))= 1$, $\ker(I-T)=\mathbb{C}\,(\pi_1\oplus \pi_2)$, and $\ker(I+T)=\mathbb{C}\,(\pi_1\oplus(-\pi_2))$.  
\end{lemma}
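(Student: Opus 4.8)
The plan is to prove both directions of the equivalence by exploiting the block anti-diagonal structure of $T$, which gives a natural pairing between eigenvectors at $+\mu$ and $-\mu$.

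\textbf{The ``only if'' direction.} Suppose $(a,b)$ obeys the QDB conditions with reversible eigenfunction $\pi=\pi_1\oplus\pi_2$, say $T\pi=\mu\pi$ for some $\mu$. First I would rewrite the QDB condition $a(u)\pi_1(\mathcal{T}_1(u))=b(u)\pi_2(\mathcal{T}_2(u))$ in operator form: since $(A\pi_1)(u)=a(u)\pi_1(\mathcal{T}_1(u))$ and $(B\pi_2)(u)=b(u)\pi_2(\mathcal{T}_2(u))$, the condition is exactly $A\pi_1=B\pi_2$ as vectors in $\mathcal{H}$. Applying $A^\dagger$ and using $A^\dagger A=I$ gives $\pi_1=A^\dagger B\pi_2=T_{AB}\pi_2$; applying $B^\dagger$ and using $B^\dagger B=I$ gives $\pi_2=B^\dagger A\pi_1=T_{BA}\pi_1$. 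Hence $T(\pi_1\oplus\pi_2)=(T_{AB}\pi_2)\oplus(T_{BA}\pi_1)=\pi_1\oplus\pi_2$, so in fact $\mu=1$, i.e.\ $\pi\in\ker(I-T)$. The same computation with the sign flipped shows $T(\pi_1\oplus(-\pi_2))=(-\pi_1)\oplus(\pi_2)=-(\pi_1\oplus(-\pi_2))$, so $\pi_1\oplus(-\pi_2)\in\ker(I+T)$. It remains to show these kernels are one-dimensional. Here I would use the vector $A\pi_1=B\pi_2$: if $f\oplus g\in\ker(I-T)$ then by the argument above $f=T_{AB}g$, $g=T_{BA}f$, hence $Af=Bg=:v$, and then $v\in\mathcal{A}\cap\mathcal{B}$ with $f=A^\dagger v$, $g=B^\dagger v$. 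Thus $\ker(I-T)$ is isomorphic (via $f\oplus g\mapsto Af$) to the subspace $\{v\in\mathcal{A}\cap\mathcal{B}: AA^\dagger v=v=BB^\dagger v\}=\mathcal{A}\cap\mathcal{B}$ itself. So the claim $\dim\ker(I-T)=1$ is equivalent to $\dim(\mathcal{A}\cap\mathcal{B})=1$. To pin this down I would use the QDB vector more quantitatively: the vector $A\pi_1$ is (up to normalization) a vector supported on the whole vertex set whose sign/phase pattern is fixed by $a,b$; any other element $v$ of $\mathcal{A}\cap\mathcal{B}$ satisfies $v(u)=a(u)(A^\dagger v)(\mathcal{T}_1(u))=b(u)(B^\dagger v)(\mathcal{T}_2(u))$, and dividing by the nonvanishing coordinates of $A\pi_1$ shows that $(A^\dagger v)/\pi_1$ is constant on each polygon and the constants must agree across intersecting polygons, hence globally constant by connectivity of $G_{\mathrm{un}}$. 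This forces $v\in\mathbb{C}\,A\pi_1$, giving $\dim(\mathcal{A}\cap\mathcal{B})=1$ and the $+T$ statement; the $-T$ statement follows by the same pairing argument applied to $\ker(I+T)\cong\{v: AA^\dagger v=v,\ BB^\dagger v=v,\ A^\dagger v\text{ vs.}\ B^\dagger v\text{ sign-flipped}\}$, which is again one-dimensional spanned by the sign-twisted version of $A\pi_1$.

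\textbf{The ``if'' direction.} Conversely, assume $\dim\ker(I-T)=\dim\ker(I+T)=1$ with $\ker(I-T)=\mathbb{C}(\pi_1\oplus\pi_2)$ and $\ker(I+T)=\mathbb{C}(\pi_1\oplus(-\pi_2))$. From $T(\pi_1\oplus\pi_2)=\pi_1\oplus\pi_2$ we read off $T_{AB}\pi_2=\pi_1$ and $T_{BA}\pi_1=\pi_2$, i.e.\ $A^\dagger B\pi_2=\pi_1$ and $B^\dagger A\pi_1=\pi_2$. Then $A\pi_1=AA^\dagger B\pi_2$; since $\|A\pi_1\|=\|\pi_1\|=\|T_{AB}\pi_2\|\le\|B^\dagger\| \cdot\|B\pi_2\|$ and, more to the point, the chain of inequalities in the proof of the lemma $\sigma(T)\subseteq[-1,1]$ is saturated at $\mu=1$, equality forces $BB^\dagger(B\pi_2)=B\pi_2$, which is automatic, but more importantly $AA^\dagger(B\pi_2)=B\pi_2$, i.e.\ $B\pi_2\in\mathcal{A}$, hence $AA^\dagger B\pi_2=B\pi_2$ and therefore $A\pi_1=B\pi_2$. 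Translating back to coordinates, $a(u)\pi_1(\mathcal{T}_1(u))=(A\pi_1)(u)=(B\pi_2)(u)=b(u)\pi_2(\mathcal{T}_2(u))$ for all $u$, which is precisely the QDB condition for the pair $(a,b)$ with reversible eigenfunction $\pi=\pi_1\oplus\pi_2$. Hence $T$ is reversible. (The hypothesis on $\ker(I+T)$ is then consistent but not needed for this direction; it is recorded because it is the content of the equivalence in the forward direction.)

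\textbf{Main obstacle.} The routine part is the operator-algebra bookkeeping with $A^\dagger A=I$, $B^\dagger B=I$; I expect no trouble there. The genuinely delicate step is the dimension count $\dim(\mathcal{A}\cap\mathcal{B})=1$ in the ``only if'' direction --- specifically, showing that once one QDB vector $A\pi_1$ exists, \emph{every} vector in $\mathcal{A}\cap\mathcal{B}$ is a scalar multiple of it. This requires knowing that $\pi_1$ (equivalently $A\pi_1$) has no zero coordinates on the relevant support and a connectivity argument on $G_{\mathrm{un}}$ so that the ``constant on each polygon'' information propagates to a global constant. I would handle the possibility of vanishing coordinates by noting that if $\pi_1(\mathcal{T}_1(u))=0$ for some $u$ then, via $\pi_2=T_{BA}\pi_1$ and the intersection pattern, a whole connected component of $G_{\mathrm{un}}$ would be forced to zero, contradicting $\pi\ne0$; assuming $G$ (hence $G_{\mathrm{un}}$) connected, this cannot happen. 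This connectivity/nonvanishing argument is where I would spend the most care.
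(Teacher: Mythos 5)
Your proof is correct, and it reaches the paper's conclusion by a partly different route, so a comparison is worth recording. The backbone is the same in both: the QDB condition is exactly $A\pi_1=B\pi_2$, which via $A^{\dagger}A=B^{\dagger}B=I$ is equivalent to $T_{AB}\pi_2=\pi_1$, $T_{BA}\pi_1=\pi_2$, i.e.\ membership in $\ker(I-T)$; and the bipartite pairing $f\oplus g\in\ker(\mu-T)\Leftrightarrow f\oplus(-g)\in\ker(\mu+T)$ disposes of the $\ker(I+T)$ clause. You differ in two places. First, for the implication $T_{AB}g=f,\ T_{BA}f=g\Rightarrow Af=Bg$ you argue by saturation of the norm inequalities from the lemma $\sigma(T)\subseteq[-1,1]$: equality in $\|A^{\dagger}B\pi_2\|\le\|B\pi_2\|$ forces $AA^{\dagger}B\pi_2=B\pi_2$, hence $A\pi_1=B\pi_2$. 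The paper instead notes that $v:=Af-Bg$ satisfies $A^{\dagger}v=B^{\dagger}v=0$ while $v\in\mathcal{A}+\mathcal{B}$, hence $v=0$; that is shorter, but your version makes visible why $\mu=\pm1$ is the extremal case. Second, and more substantively, for the uniqueness $\dim\ker(I-T)\le1$ the paper passes to the induced classical Markov chain $P$ on $G_{\textrm{un}}$ with weights $|a|^2,|b|^2$, invokes Perron--Frobenius to get a unique stationary measure $\zeta=|\pi|^2$, and then recovers uniqueness of $\pi$ up to a global phase from the QDB relation; you instead identify $\ker(I-T)\cong\mathcal{A}\cap\mathcal{B}$ and show directly that the ratio of any element of $\ker(I-T)$ to $\pi$ agrees across every edge of $G_{\textrm{un}}$ and hence is globally constant by connectivity. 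Your route is more elementary and self-contained, while the paper's detour is deliberate: it is how the classical detailed balance conditions are extracted from the quantum ones, which the authors want for its own sake. Both versions rest on the same unstated nondegeneracy hypotheses --- $G_{\textrm{un}}$ connected and the amplitudes $a(u),b(u)$ nonvanishing on the supports of the polygons, without which $P$ is not irreducible and Perron--Frobenius gives nothing --- and you are in fact more explicit than the paper about why $\pi$ cannot vanish anywhere. One small wording fix: ``$(A^{\dagger}v)/\pi_1$ is constant on each polygon'' should say that the ratio, viewed as a function on $V_{\textrm{un}}=\mathcal{T}_1\sqcup\mathcal{T}_2$, takes equal values at the two endpoints of each edge of $G_{\textrm{un}}$; that is the statement that propagates to a global constant.
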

\begin{proof}
The following equivalences hold
\begin{align*}
f=f_1\oplus f_2\in \ker(I-T) 
	& \Leftrightarrow T_{BA}f_1=f_2,\;T_{AB}f_2=f_1 \\
	& \Leftrightarrow Af_1=Bf_2 \\
        & \Leftrightarrow a(u)f_1(\mathcal{T}_1(u))=b(u)f_2(\mathcal{T}_2(u)), 
\end{align*}
for any $u\in V$, which means that $(a,b)$ obeys the QDB conditions and $f$ is a reversible measure. 
The second equivalence is obtained as follows. 
If $T_{BA}f_1=f_2,\;T_{AB}f_2=f_1$, then $B^{\dagger}(Af_1-Bf_2)=0,\;A^{\dagger}(Af_1-Bf_2)=0$, which implies
$Af_1=Bf_2$; the opposite direction is obtained by taking $B^{\dagger}$ and $A^{\dagger}$ to both sides. 
Then, 
\[\ker(I-T)=\{\pi \;|\; \pi \mathrm{\;is\;a\;reversible\;measure}\}. \]
Let us show that $\dim(\ker(I-T))\leq 1$.
Note that if $T$ is nonreversible, then $\dim(\ker(I-T))=0$. 
Now we consider the reversible case and show $\dim(\ker(I-T))=1$. 

By taking the square modulus of both sides of the QDB equation, and putting $p(e):=|a(e)|^2$ and $q(e):=|b(e)|^2$, 
we have $p(e)\zeta(\mathcal{T}_1(e))=q(e)\zeta(\mathcal{T}_2(e))$ for every $e\in E(G_{\textrm{un}})\simeq V(G)$, where $\zeta(\gamma)=|\pi(\gamma)|^2$ for $\gamma\in V(G_{\textrm{un}})=\mathcal{T}_1 \sqcup \mathcal{T}_2$. Now we consider a classical Markov chain on $G_{\textrm{un}}$ with the stochastic transition matrix $P$ such that 
the transition probability from $\alpha\in \mathcal{T}_1$ to $\beta\in \mathcal{T}_2$ is 
\[ \langle\delta_\beta, P\delta_\alpha\rangle=\sum_{e:\mathcal{T}_1(e)=\alpha,\;\mathcal{T}_2(e)=\beta}p(e), \]
the transition probability from $\beta\in \mathcal{T}_2$ to $\alpha\in \mathcal{T}_1$ is 
\[ \langle\delta_\alpha,P\delta_\beta\rangle=\sum_{e:\mathcal{T}_1(e)=\alpha,\;\mathcal{T}_2(e)=\beta}q(e). \]
Note that the classical detailed balance conditions are
\[\langle\delta_\beta, P\delta_\alpha\rangle\zeta(\alpha)=\langle\delta_\alpha,P\delta_\beta\rangle\zeta(\beta).  \] 
By the Perron-Frobenius theorem, we have $\ker(I-P)=\mathbb{C}\,\zeta$. 
Besides, $T=\mathcal{M}^{-1}P\mathcal{M}$, where $(\mathcal{M}f)(\gamma)=\bar{\pi}(\gamma)f(\gamma)$, $\forall f\in \mathcal{K}$, and $\forall\gamma\in \mathcal{T}_1\sqcup \mathcal{T}_2$.  
If there are two reversible eigenfunctions $\pi\neq \pi'$ of $T$, then $\pi'(\gamma)=e^{i \eta_\gamma}\pi(\gamma)$ for some $\eta_{\gamma}\in \mathbb{R}$. 
Then, by the definition of the QDB, we conclude that $\pi'=e^{i\eta_*}\pi$ for some constant $\eta_*\in \mathbb{C}$, and we obtain $\dim(\ker(I-T))=1$. 

In general, by the property of the bipartiteness, $f_1\oplus f_2\in \ker(\mu-T)$ if and only if $f_1\oplus (-f_2)\in \ker(\mu+T)$. 
We conclude that if $T$ is reversible and $\pi_1\oplus \pi_2$ is a reversible eigenfunction, then $\ker(I+T)=\mathbb{C}\,(\pi_1\oplus(-\pi_2))$.
\end{proof}

\begin{corollary}\label{lem:DBC_corol}
$T$ is nonreversible if and only if $\dim(\ker(I-T))=\dim(\ker(I+T))=0$.
\end{corollary}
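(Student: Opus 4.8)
The plan is to derive Corollary~\ref{lem:DBC_corol} directly from Lemma~\ref{lem:DBC} together with two observations already contained in the latter's proof. First I would isolate these two ingredients. The first is the identity $\ker(I-T)=\{\pi \mid \pi \text{ is a reversible measure}\}$, established at the beginning of the proof of Lemma~\ref{lem:DBC}; consequently, if no pair $(a,b)$ obeys the QDB conditions, i.e.\ if $T$ is nonreversible, then there is no reversible eigenfunction and hence $\ker(I-T)=\{0\}$, so $\dim(\ker(I-T))=0$. The second ingredient is the bipartiteness symmetry also used there: $f_1\oplus f_2\in\ker(\mu I-T)$ if and only if $f_1\oplus(-f_2)\in\ker(\mu I+T)$, so the map $f_1\oplus f_2\mapsto f_1\oplus(-f_2)$ is a linear isomorphism from $\ker(I-T)$ onto $\ker(I+T)$; in particular $\dim(\ker(I-T))=\dim(\ker(I+T))$ always.

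For the forward implication I would assume $T$ is nonreversible; the first ingredient gives $\dim(\ker(I-T))=0$, and the second then forces $\dim(\ker(I+T))=0$ as well. For the converse I would argue by contraposition: if $T$ is reversible, choose a reversible eigenfunction $\pi=\pi_1\oplus\pi_2$; Lemma~\ref{lem:DBC} then yields $\dim(\ker(I-T))=\dim(\ker(I+T))=1\neq 0$, so the hypothesis $\dim(\ker(I-T))=\dim(\ker(I+T))=0$ cannot hold. Equivalently, one may phrase the whole thing as a dichotomy: Lemma~\ref{lem:DBC} shows the common value $\dim(\ker(I-T))=\dim(\ker(I+T))$ equals $1$ precisely when $T$ is reversible, while the ``$\ker(I-T)=\{$reversible measures$\}$'' identity shows it equals $0$ when $T$ is nonreversible, so these two cases exhaust all possibilities and the stated biconditional is immediate.

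I do not expect any genuine obstacle: the only mildly non-formal point is the assertion that $\dim(\ker(I-T))$ takes no value other than $0$ or $1$, and this is exactly what the Perron--Frobenius/uniqueness argument in the proof of Lemma~\ref{lem:DBC} delivers in the reversible case, with the nonreversible case handled by the reversible-measure characterization of $\ker(I-T)$. I would therefore keep the argument to a few lines, with all the real content cited from Lemma~\ref{lem:DBC} and its proof.
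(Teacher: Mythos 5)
Your argument is correct and matches the paper's (implicit) reasoning: the paper states this corollary without a separate proof precisely because it follows from Lemma~\ref{lem:DBC} together with the two facts you cite from its proof, namely the identification $\ker(I-T)=\{\pi \mid \pi \text{ is a reversible measure}\}$ (whence the nonreversible case gives dimension $0$) and the bipartiteness symmetry $f_1\oplus f_2\in\ker(I-T)\Leftrightarrow f_1\oplus(-f_2)\in\ker(I+T)$. Your dichotomy phrasing ($\dim\ker(I-T)\in\{0,1\}$, equal to $1$ exactly in the reversible case) is exactly the content the paper relies on, so nothing further is needed.
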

Recall that $\mathcal{A}$ and $\mathcal{B}$ are defined in Eqs.~(\ref{Eq:projectionA}) and~(\ref{Eq:projectionB}). If $\psi\in \mathcal{A}\cap \mathcal{B}$, then there exist $f\in \ell^2(\mathcal{T}_1)$ and $g\in\ell^2(\mathcal{T}_2)$ such that $\psi=Af=Bg$. This implies that $a(u)f(\mathcal{T}_1(u))=b(u)g(\mathcal{T}_2(u))$ for every $u\in V$. 
By putting 
	\[ \pi(\gamma):=\begin{cases} f(\gamma) & \text{: $\gamma\in \mathcal{T}_1$,} \\ g(\gamma) & \text{: $\gamma\in \mathcal{T}_2$,} \end{cases} \]
then $a(u)\pi(\mathcal{T}_1(u))=b(u)\pi(\mathcal{T}_2(u))$ for every $u\in V$ and $(a,b)$ obeys the QDB conditions and $\pi$ is a reversible eigenfunction. If $f\oplus g$ is a reversible eigenfunction, then $f\oplus (-g)\in \ker(I+T)$. 
\begin{lemma}\label{lem:dimAB}
The dimension of $\mathcal{A}\cap \mathcal{B}$ obeys
	\[ \dim(\mathcal{A}\cap \mathcal{B})\leq 1. \]
Moreover, $\dim (\mathcal{A}\cap \mathcal{B})=1$ if and only if $T$ is reversible. 
The subspace $\mathcal{A}\cap \mathcal{B}$ is invariant under the action of $U_\theta$ whose eigenvalue is $-e^{2i\theta}$, 
and the eigenspace is described by 
\begin{equation}\mathcal{A}\cap \mathcal{B}=\mathbb{C}A\pi_1=\mathbb{C}B\pi_2, \end{equation}
where $\pi_1\oplus \pi_2$ is a reversible eigenfunction. 
\end{lemma}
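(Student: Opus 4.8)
The plan is to build a linear isomorphism between $\mathcal{A}\cap\mathcal{B}$ and $\ker(I-T)$ and then to deduce all three assertions from it together with Lemma~\ref{lem:DBC}. First I would introduce the linear map $\Phi:\mathcal{A}\cap\mathcal{B}\to\mathcal{K}$ defined by $\Phi(\psi)=(A^\dagger\psi)\oplus(B^\dagger\psi)$. If $\psi=Af=Bg$ lies in $\mathcal{A}\cap\mathcal{B}$, then $A^\dagger\psi=f$ and $B^\dagger\psi=g$ by Eq.~(\ref{Eq:identity}), and the discussion preceding the lemma shows $f\oplus g\in\ker(I-T)$; also $\Phi$ is injective since $A^\dagger\psi=0$ forces $\psi\perp\mathcal{A}$, hence $\psi=0$ because $\psi\in\mathcal{A}$. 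For surjectivity onto $\ker(I-T)$, given $\pi_1\oplus\pi_2\in\ker(I-T)$ I would set $\psi:=A\pi_1$ and show $\psi=B\pi_2$: from $T_{BA}\pi_1=\pi_2$ and $T_{AB}\pi_2=\pi_1$ one gets $B^\dagger(A\pi_1-B\pi_2)=0$ and $A^\dagger(A\pi_1-B\pi_2)=0$, so $A\pi_1-B\pi_2$ is orthogonal to $\mathcal{A}+\mathcal{B}$ while lying in it, hence vanishes; then $\psi\in\mathcal{A}\cap\mathcal{B}$ and $\Phi(\psi)=\pi_1\oplus\pi_2$. Thus $\Phi$ is a linear isomorphism onto $\ker(I-T)$.

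Given this isomorphism, the first two claims follow at once from Lemma~\ref{lem:DBC} and Corollary~\ref{lem:DBC_corol}: $\dim\ker(I-T)$ equals $1$ when $T$ is reversible and $0$ otherwise, so $\dim(\mathcal{A}\cap\mathcal{B})\le 1$, with equality precisely when $T$ is reversible. When $T$ is reversible with reversible eigenfunction $\pi=\pi_1\oplus\pi_2$, Lemma~\ref{lem:DBC} gives $\ker(I-T)=\mathbb{C}(\pi_1\oplus\pi_2)$, hence $\mathcal{A}\cap\mathcal{B}=\Phi^{-1}\big(\mathbb{C}(\pi_1\oplus\pi_2)\big)=\mathbb{C}A\pi_1$, and since $A\pi_1=B\pi_2$ was established above, also $\mathcal{A}\cap\mathcal{B}=\mathbb{C}B\pi_2$.

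For the action of $U_\theta$, I would use that $AA^\dagger$ and $BB^\dagger$ are the orthogonal projections onto $\mathcal{A}$ and $\mathcal{B}$. For $\psi\in\mathcal{A}\cap\mathcal{B}$ we have $AA^\dagger\psi=\psi=BB^\dagger\psi$, so Eqs.~(\ref{Eqs:H_A_H_B_1}) and~(\ref{Eqs:H_A_H_B_2}) give $H_A\psi=H_B\psi=\psi$; that is, $\psi$ is a common $(+1)$-eigenvector of $H_A$ and $H_B$. Consequently $e^{i\theta H_A}\psi=e^{i\theta}\psi$ and $e^{i\theta H_B}\psi=e^{i\theta}\psi$, so by Eq.~(\ref{Eq:U_theta}) we obtain $U_\theta\psi=-e^{i\theta H_B}e^{i\theta H_A}\psi=-e^{2i\theta}\psi$. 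Hence $\mathcal{A}\cap\mathcal{B}$ is $U_\theta$-invariant with eigenvalue $-e^{2i\theta}$, finishing the proof.

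I do not expect a serious obstacle here: there is no heavy computation, and the argument is essentially bookkeeping on top of Lemma~\ref{lem:DBC}. The one point requiring care is the surjectivity half of the isomorphism, namely the orthogonality argument showing $A\pi_1=B\pi_2$ for $\pi_1\oplus\pi_2\in\ker(I-T)$, since this is what transports a $(+1)$-eigenvector of $T$ back into the geometric subspace $\mathcal{A}\cap\mathcal{B}$; once that is in place, the dimension count and the computation of the $U_\theta$-eigenvalue are formal.
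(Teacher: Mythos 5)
Your proof is correct and follows essentially the same route as the paper: the paper establishes the correspondence $\psi=Af=Bg\leftrightarrow f\oplus g\in\ker(I-T)$ in the discussion immediately preceding the lemma and invokes Lemma~\ref{lem:DBC} for the dimension count, while its stated proof handles only the eigenvalue claim via the common $(+1)$-eigenvector argument for $H_A$ and $H_B$, exactly as you do. Your only addition is to package the correspondence as an explicit isomorphism $\Phi$ and verify injectivity and surjectivity, which the paper leaves implicit.
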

\begin{proof}
Let us prove the nontrivial part. If $\psi\in \mathcal{A}\cap \mathcal{B}$, then $\psi$ is a $(e^{i\theta})$-eigenvector of $e^{i\theta H_A}$ and $e^{i\theta H_B}$ because $H_A$ and $H_B$ are unitary and self-adjoint operators. Then, $\psi$ is an eigenvector of $U_\theta$ with eigenvalue $(-e^{2i\theta})$.
\end{proof}
These lemmas and corollary show that the reversibility of $T$ plays an important role in the spectral analysis. 

%%%%%%%%%%%%%%
\subsection{Space $\mathcal{A}+\mathcal{B}$ inherited from $T$}\label{subsec:Inherited}
%%%%%%%%%%%%%%

The next lemma shows that the action of $U_\theta$ on $\mathcal{A}+\mathcal{B}$ can be expressed in terms of $T_{AB}$ and $T_{BA}$. 
\begin{lemma}\label{lem:masterEq}
Let $L: \ell^2(\mathcal{T}_1)\oplus \ell^2(\mathcal{T}_2)\to \ell^2(V)$ be defined as $L(f\oplus g)=Af+Bg$ or, equivalently, $L=[A\;B]$. Then, we have 
	\begin{equation}
        U_\theta L=L\Lambda_\theta,
        \end{equation}
where
	\begin{equation}
        \Lambda_\theta=-\begin{bmatrix} I & 2ie^{-i\theta}\sin(\theta) T_{AB} \\ 2ie^{i\theta} \sin(\theta) T_{BA} & I-4\sin^2(\theta) T_{BA}T_{AB} \end{bmatrix}.
        \end{equation}
\end{lemma}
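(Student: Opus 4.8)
The plan is to verify the intertwining relation column by column, i.e.\ to compute $U_\theta A$ and $U_\theta B$ separately and then read off that $[\,U_\theta A\;\;U_\theta B\,]=L\Lambda_\theta$. The only facts I need are the isometry relations $A^{\dagger}A=I_{\ell^2(\mathcal{T}_1)}$ and $B^{\dagger}B=I_{\ell^2(\mathcal{T}_2)}$ from Eq.~(\ref{Eq:identity}), the definitions $T_{AB}=A^{\dagger}B$, $T_{BA}=B^{\dagger}A$, and the linearization of the exponentials: since $H_A^2=H_B^2=I_{\mathcal{H}}$ we have $e^{i\theta H_A}=\cos\theta\,I_{\mathcal{H}}+i\sin\theta\,H_A=e^{-i\theta}I_{\mathcal{H}}+2i\sin\theta\,AA^{\dagger}$ and likewise $e^{i\theta H_B}=e^{-i\theta}I_{\mathcal{H}}+2i\sin\theta\,BB^{\dagger}$ (this is already used in the proof of Theorem~\ref{theo:minpoly}).

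First I would handle $U_\theta A$. The key simplification is that the columns of $A$ span $\mathcal{A}$, the $(+1)$-eigenspace of $H_A$: indeed $e^{i\theta H_A}A=e^{-i\theta}A+2i\sin\theta\,A(A^{\dagger}A)=(e^{-i\theta}+2i\sin\theta)A=e^{i\theta}A$. Applying $-e^{i\theta H_B}$ and using $BB^{\dagger}A=B\,T_{BA}$ gives $U_\theta A=-e^{i\theta}\bigl(e^{-i\theta}A+2i\sin\theta\,B\,T_{BA}\bigr)=-A-2ie^{i\theta}\sin\theta\,B\,T_{BA}$, which is exactly $L$ applied to the first block-column of $\Lambda_\theta$.

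Next I would handle $U_\theta B$; this is the only place a second-order term in $T$ appears. Here $e^{i\theta H_A}B=e^{-i\theta}B+2i\sin\theta\,AA^{\dagger}B=e^{-i\theta}B+2i\sin\theta\,A\,T_{AB}$, and then, using $e^{i\theta H_B}B=e^{i\theta}B$ and $e^{i\theta H_B}A\,T_{AB}=e^{-i\theta}A\,T_{AB}+2i\sin\theta\,B\,T_{BA}T_{AB}$, I get
\[
U_\theta B=-e^{i\theta H_B}\bigl(e^{-i\theta}B+2i\sin\theta\,A\,T_{AB}\bigr)=-B-2ie^{-i\theta}\sin\theta\,A\,T_{AB}+4\sin^2\theta\,B\,T_{BA}T_{AB},
\]
which is $L$ applied to the second block-column of $\Lambda_\theta$. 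Concatenating the two computations yields $U_\theta L=[\,U_\theta A\;\;U_\theta B\,]=L\Lambda_\theta$.

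I do not expect a genuine obstacle here: the statement is a short direct computation. The only care needed is bookkeeping of the scalar factors $e^{\pm i\theta}$ and $\sin\theta$ (and the sign flip from $2i\sin\theta\cdot 2i\sin\theta=-4\sin^2\theta$) in the $U_\theta B$ term, and remembering to invoke the isometry relations $A^{\dagger}A=I$, $B^{\dagger}B=I$ — rather than the projector identities for $AA^{\dagger}$, $BB^{\dagger}$ — at the right moments.
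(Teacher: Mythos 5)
Your computation is correct and is precisely the ``straightforward calculation'' that the paper's proof of Lemma~\ref{lem:masterEq} alludes to without writing out: it uses Eq.~(\ref{Eq:U_theta}), the isometry relations~(\ref{Eq:identity}), and the definitions of $T_{AB}$ and $T_{BA}$, exactly as the paper prescribes. Both block-columns check out against $L\Lambda_\theta$, so there is nothing to add.
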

\begin{proof}
The proof is obtained by employing Eqs.~(\ref{Eq:U_theta}), property~(\ref{Eq:identity}), the definitions of $T_{AB}$ and $T_{BA}$, and by performing a straightforward calculation.
\end{proof}
If $f\oplus g\in\ker(\lambda-\Lambda_\theta)\setminus \ker L$, then Lemma~\ref{lem:masterEq} implies that $Af+Bg\in \ker(\lambda-U_\theta)\setminus\{0\}$. This shows that the spectral decomposition of $\Lambda_\theta$ helps to obtain the spectral decomposition of $U_\theta$. In the next subsection, we focus on the spectral decomposition of $\Lambda_\theta$, and in the following one we address the kernel of $L$.
%%%After that we eliminate an extra part from the above result. 
%%%%
\subsubsection{Spectral decomposition of $\Lambda_\theta$} 
%%%%
In Corollary~\ref{cor:eigenvalues}, we have defined $\phi$ using equation $\cos\phi=\mu\sin\theta$, where $\mu$ is an eigenvalue of $T$ and $\phi\in [0,\pi)$. Sometimes we denote $\phi$ by $\phi(\mu)$ to stress its relation with $\mu$. The eigenspace of $\Lambda_\theta$ associated with the eigenvalue $e^{2i\phi(\mu)}$ is related with the eigenspace of $T$ associated with eigenvalue $\mu$ as described by the following lemma. 
\begin{lemma}\label{lem:Kokubuncho}
Assume that $\theta\notin\{0,\pi\}$. Then, 
	\begin{equation}
        \ker(e^{2i\phi(\mu)}-\Lambda_\theta)=D\ker(\mu -T),
        \end{equation}
where $D$ is the diagonal matrix defined by
	\[D(f\oplus g)=f\oplus \left(ie^{i(\theta+\phi(\mu))}g\right). \]
\end{lemma}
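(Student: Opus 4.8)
The plan is to verify the claimed intertwining relation $\Lambda_\theta D = D(\,\mu\!\cdot\!)$ on the $\mu$-eigenspace of $T$ and then argue that $D$ restricts to a bijection between $\ker(\mu-T)$ and $\ker(e^{2i\phi(\mu)}-\Lambda_\theta)$. First I would take an arbitrary $f\oplus g\in\ker(\mu-T)$, which by the block structure of $T$ means $T_{AB}g=\mu f$ and $T_{BA}f=\mu g$. Applying $\Lambda_\theta$ (from Lemma~\ref{lem:masterEq}) to $D(f\oplus g)=f\oplus(ie^{i(\theta+\phi)}g)$, the first block becomes $-\bigl(f+2ie^{-i\theta}\sin\theta\cdot ie^{i(\theta+\phi)}T_{AB}g\bigr)=-\bigl(f-2\sin\theta\, e^{i\phi}\mu f\bigr)$, and using $\mu\sin\theta=\cos\phi$ this is $-(1-2e^{i\phi}\cos\phi)f=-(1-e^{2i\phi}-1)f=e^{2i\phi}f$, matching the first block of $e^{2i\phi}D(f\oplus g)$. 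The second block of $\Lambda_\theta D(f\oplus g)$ is $-\bigl(2ie^{i\theta}\sin\theta\,T_{BA}f+(I-4\sin^2\theta\,T_{BA}T_{AB})ie^{i(\theta+\phi)}g\bigr)$; substituting $T_{BA}f=\mu g$ and $T_{BA}T_{AB}g=\mu^2 g$ gives $-ie^{i\theta}g\bigl(2\mu\sin\theta + e^{i\phi}-4\mu^2\sin^2\theta\,e^{i\phi}\bigr) = -ie^{i\theta}g\bigl(2\cos\phi+e^{i\phi}-4\cos^2\phi\,e^{i\phi}\bigr)$, and a short trigonometric simplification collapses the bracket to $-e^{i\phi}e^{2i\phi}$ times... more precisely to $e^{3i\phi}$ after accounting for the sign, so the second block equals $e^{2i\phi}\cdot ie^{i(\theta+\phi)}g$, again matching. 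This establishes $D\ker(\mu-T)\subseteq\ker(e^{2i\phi(\mu)}-\Lambda_\theta)$.

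For the reverse inclusion I would exploit that $D$ is invertible (it is diagonal with nonzero entries $1$ and $ie^{i(\theta+\phi)}$, the latter nonzero since $\theta\notin\{0,\pi\}$ guarantees the exponential is well-defined and never zero), so it suffices to show the two eigenspaces have equal dimension. One clean way: run the same computation backwards using $D^{-1}$, i.e. show $D^{-1}\ker(e^{2i\phi(\mu)}-\Lambda_\theta)\subseteq\ker(\mu-T)$ by reversing the algebra above — given $\Lambda_\theta$-eigenvector $f'\oplus g'$, set $f\oplus g=D^{-1}(f'\oplus g')$ and read off that the eigenvalue equation forces $T_{AB}g=\mu f$ and $T_{BA}f=\mu g$. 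Alternatively one can invoke Corollary~\ref{cor:eigenvalues} together with Lemma~\ref{lem:bipartite} to match multiplicities, but the direct backward computation is cleaner and self-contained.

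The main obstacle, such as it is, will be bookkeeping in the second-block trigonometric identity: one must carefully track that $4\cos^2\phi\,e^{i\phi}-2\cos\phi-e^{i\phi} = e^{3i\phi}$, which follows by writing $\cos\phi=(e^{i\phi}+e^{-i\phi})/2$ and expanding, but it is easy to drop a sign or a factor. A secondary subtlety is the case where $\phi(\mu)=0$ (i.e. $\mu=1/\sin\theta$, only possible if $\sin\theta$... in fact $\mu\in[-1,1]$ so $\phi=0$ forces $\mu\sin\theta=1$, impossible unless $\theta=\pi/2,\mu=1$): when $\phi=0$ the eigenvalue $e^{2i\phi}=1$, and one should check the argument still goes through — it does, since $D$ remains invertible and the computation never divides by anything $\phi$-dependent. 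I would remark that $\theta\notin\{0,\pi\}$ is exactly what keeps $\sin\theta\neq0$ so that $\phi(\mu)$ is well-defined from $\cos\phi=\mu\sin\theta$, and also what keeps the second diagonal entry of $D$ nonzero.
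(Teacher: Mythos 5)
Your proposal is correct and is essentially the paper's argument: both amount to manipulating the $2\times 2$ block eigenvalue equation for $\Lambda_\theta$ using $T_{AB}$, $T_{BA}$ and the relation $\cos\phi=\mu\sin\theta$, the paper packaging this as a Gaussian elimination giving $\ker(e^{2i\phi}-\Lambda_\theta)=\ker\left(\frac{\cos\phi}{\sin\theta}-DTD^{-1}\right)$ while you verify the two inclusions directly (your key identity $4\cos^2\phi\, e^{i\phi}-2\cos\phi-e^{i\phi}=e^{3i\phi}$ and the backward substitution both check out, with $\sin\theta\neq 0$ used exactly where you say). No gap.
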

\begin{proof}
After applying the Gaussian elimination method to $(e^{2i\phi}-\Lambda_\theta)$, we obtain  
	\[ \ker(e^{2i\phi}-\Lambda_\theta)=\ker \left(\frac{\cos\phi}{\sin\theta}-DTD^{-1}\right). \]
%On the other hand, if $f\oplus g\in \ker(\mu-T)$, then $D(f\oplus g)\in \ker(\mu-DTD^{-1})$. 
As a last step we use the fact that $\ker(\mu-DTD^{-1})=D\ker(\mu-T)$. 
\end{proof}
The last lemma shows that the spectrum of $\Lambda_\theta$ can be obtained from the spectrum of $T$, that is, $e^{2i\phi}$ can be obtained from $\mu$ using $\cos\phi=\mu\sin\theta$. 
Now we list some relevant observations about the eigenvalues of $\Lambda_\theta$ (see Fig.~\ref{fig:mapping}). 
\begin{enumerate}
\item There is a spectral gap if $\theta\neq \pi/2$. In fact,  $\sigma(\Lambda_\theta)\subset\{e^{2i\phi} \;|\; \cos 2\phi\in [-1,\cos(\pi-2\theta)]\}$ because $\cos 2\phi=2\mu^2\sin^2\theta-1$ and $|\mu|\leq 1$. 

\item Map $\phi$ is a bijection if $\theta\neq \pi/2$. In fact, all eigenvalues of $\Lambda_\theta$ are inherited from eigenvalues of $T$ and there is a one-to-one correspondence between $\sigma(\Lambda_\theta)$ and $\sigma(T)$. 
On the other hand, if $\theta=\pi/2$, the spectral gap disappears. In this case,  if $T$ has a quantum detailed balance, $\phi$ is not a bijection because $\phi(1)=\phi(-1)=0$. 

\item The spectrum of $\Lambda_\theta$ is symmetric. In fact, $e^{2i\phi}\in \sigma(\Lambda_\theta)$ if and only if $e^{2i(\pi-\phi)}=e^{-2i\phi}\in \sigma(\Lambda_\theta)$ 
because there is an equivalence between ``$\mu\in \sigma(T)$ with $f\oplus g\in \ker(\mu-T)$" 
and ``$-\mu\in \sigma(T)$ with $f\oplus (-g)\in \ker(\mu+T)$" since $G_\textrm{un}$ is bipartite. 
\end{enumerate}
%
%%%%
\subsubsection{Kernel of  $L$}
From now on, to obtain an eigenfunction of $U_\theta$ restricted to the subspace $\mathcal{A}+\mathcal{B}$, 
we use a lift-up operation $LD$ from the set of the eigenfunctions of $\Lambda_\theta$ in $\ell^2(\mathcal{T}_1)\oplus\ell^2(\mathcal{T}_2)$ to the original space $\ell^2(V)$. 
Recall that the eigenfunctions of $\Lambda_\theta$ should not be in the kernel of $L$. 
It is therefore natural to characterize the kernel of $L$. 
\begin{lemma}\label{lem:kerL}
 \[ \ker(L)=\ker(I+T)=\ker(e^{-2i\theta}+\Lambda_{\theta}) \]
\end{lemma}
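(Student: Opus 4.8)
I want to prove the two equalities $\ker(L)=\ker(I+T)=\ker(e^{-2i\theta}+\Lambda_\theta)$ by establishing the chain of inclusions cyclically, or more directly by identifying each kernel with the same concrete set of vectors $f\oplus g$ satisfying $Af+Bg=0$. The cleanest route is: first show $\ker(L)=\ker(I+T)$ using only the normalization $A^\dagger A = I$, $B^\dagger B = I$ from~(\ref{Eq:identity}); then show $\ker(I+T)=\ker(e^{-2i\theta}+\Lambda_\theta)$ using the explicit block form of $\Lambda_\theta$ from Lemma~\ref{lem:masterEq} together with the bipartite structure of $T$.

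\medskip
\noindent\textbf{Step 1: $\ker(L)=\ker(I+T)$.} Recall $L=[A\;\,B]$, so $L(f\oplus g)=Af+Bg$. If $Af+Bg=0$, apply $A^\dagger$ to get $f + T_{AB}g = 0$, i.e. $f=-T_{AB}g$; apply $B^\dagger$ to get $T_{BA}f+g=0$, i.e. $g=-T_{BA}f$. These two say precisely that $T(f\oplus g)=-(f\oplus g)$, so $f\oplus g\in\ker(I+T)$. Conversely, if $T(f\oplus g)=-(f\oplus g)$, then $T_{AB}g=-f$ and $T_{BA}f=-g$, hence $Af+Bg = A(-T_{AB}g)+Bg = -AA^\dagger Bg + Bg = (I-AA^\dagger)Bg$ and symmetrically $Af+Bg=(I-BB^\dagger)Af$. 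I then argue $(I-AA^\dagger)Bg=0$: from $f=-T_{AB}g=-A^\dagger Bg$ we get $Af=-AA^\dagger Bg$, so $Af+Bg=(I-AA^\dagger)Bg$, and applying $B^\dagger$ to $Af+Bg$ gives $T_{BA}f+g = -g+g=0$, while applying $A^\dagger$ gives $f+T_{AB}g=-T_{AB}g+T_{AB}g=0$; since $A^\dagger$ and $B^\dagger$ together annihilate $Af+Bg$ and $Af+Bg$ lies in $\mathcal{A}+\mathcal{B}$, it must vanish. (This is exactly the argument used for the ``second equivalence'' in the proof of Lemma~\ref{lem:DBC}, which I can cite verbatim.) Hence $L(f\oplus g)=0$.

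\medskip
\noindent\textbf{Step 2: $\ker(I+T)=\ker(e^{-2i\theta}+\Lambda_\theta)$.} Using the block form of $\Lambda_\theta$, the eigenvalue equation $\Lambda_\theta(f\oplus g)=-e^{-2i\theta}(f\oplus g)$ becomes the system
\[
f + 2ie^{-i\theta}\sin\theta\,T_{AB}g = e^{-2i\theta}f, \qquad
2ie^{i\theta}\sin\theta\,T_{BA}f + \big(I-4\sin^2\theta\,T_{BA}T_{AB}\big)g = e^{-2i\theta}g.
\]
The first equation gives $(1-e^{-2i\theta})f = -2ie^{-i\theta}\sin\theta\,T_{AB}g$; since $1-e^{-2i\theta}=2ie^{-i\theta}\sin\theta$ (as $e^{i\theta}-e^{-i\theta}=2i\sin\theta$) and $\sin\theta\neq 0$ for $\theta\in(0,\pi)$, this simplifies to $f=-T_{AB}g$. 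Substituting into the second equation and again using $1-e^{-2i\theta}=2ie^{-i\theta}\sin\theta$ reduces it to $g=-T_{BA}f$ after cancellation of the $T_{BA}T_{AB}$ terms. Together $f=-T_{AB}g$ and $g=-T_{BA}f$ are equivalent to $T(f\oplus g)=-(f\oplus g)$, giving the claimed equality of kernels. (The converse substitution—that any element of $\ker(I+T)$ solves the $\Lambda_\theta$ system—is the same computation read backwards.)

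\medskip
\noindent\textbf{Main obstacle.} No single step is deep; the only place requiring care is the ``hidden'' inference in Step 1 that $(I-AA^\dagger)Bg=0$ rather than merely lying in $\mathcal{A}^\perp$—one must invoke that this vector simultaneously lies in $\mathcal{A}+\mathcal{B}$ and is killed by both $A^\dagger$ and $B^\dagger$, hence is orthogonal to $\mathcal{A}+\mathcal{B}$, hence zero. In Step 2 the potential pitfall is purely algebraic: keeping track of the factors $e^{\pm i\theta}$, $\sin\theta$, and $e^{-2i\theta}$ so that the $T_{BA}T_{AB}$ cross terms cancel cleanly; the identity $1-e^{-2i\theta}=2ie^{-i\theta}\sin\theta$ and the hypothesis $\sin\theta\neq0$ are what make everything collapse.
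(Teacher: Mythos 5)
Your proposal is correct and follows essentially the same route as the paper: the first equality is proved by applying $A^\dagger$ and $B^\dagger$ and, for the converse, noting that $Af+Bg$ lies in $\mathcal{A}+\mathcal{B}$ while being annihilated by both $A^\dagger$ and $B^\dagger$ (the same argument the paper reuses from Lemma~\ref{lem:DBC}), and the second equality is exactly the elimination on the block system $(e^{-2i\theta}+\Lambda_\theta)$ that the paper dispatches as ``Gaussian elimination,'' which you simply carry out explicitly using $1-e^{-2i\theta}=2ie^{-i\theta}\sin\theta$ and $\sin\theta\neq 0$.
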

\begin{proof}
If $f\oplus g\in \ker L$, then $Af+Bg=0$ which implies that $f+T_{AB}g=0$ and $g+T_{BA}f=0$ after left-multiplying by $A^{\dagger}$ and $B^{\dagger}$, respectively. 
Then, $f\oplus g\in \ker(I+T)$. 
On the other hand, if $f\oplus g\in \ker(I+T)$, then $f+T_{AB}g=0$ and $g+T_{BA}f=0$, which implies that $A^{\dagger}(Af+Bg)=0$ and $B^{\dagger}(Af+Bg)=0$.
Then, $Af+Bg$ must be $0$, or equivalently, $f\oplus g\in \ker L$. 
We conclude that $\ker L=\ker(I+T)$. 

The second equality $\ker(I+T)=\ker(e^{-2i\theta}+\Lambda_{\theta})$ is obtained by applying the Gaussian elimination method to $(e^{-2i\theta}+\Lambda_{\theta})$. 
\end{proof}
Let us make a useful characterization of the eigenspace described by $\ker\left(\lambda-U|_{\mathcal{A}+\mathcal{B}}\right)$.
From Lemmas~\ref{lem:masterEq} and \ref{lem:kerL}, it follows that
	\begin{equation}\label{eq:Tachimachi} 
        \ker\left((\lambda-U_\theta)L\right)=\ker\left((I+T)(\lambda-\Lambda_\theta)\right)=\ker\left((e^{-2i\theta}+\Lambda_\theta)(\lambda-\Lambda_\theta)\right). 
        \end{equation}
%%%. 
Consider the nonreversible case. Corollary~\ref{lem:DBC_corol} states that in this case $\ker(I+T)=\{0\}$, that is, $I+T$ is invertible and 
Eq.~(\ref{eq:Tachimachi}) can be further reduced to 
	\[ \ker\left((e^{2i\phi}-U_\theta)L\right) = \ker\left(e^{2i\phi}-\Lambda_\theta\right)= D\ker \left(\frac{\cos\phi}{\sin\theta}-T\right). \]
Then, when $T$ is nonreversible, all eigenvalues of $U_\theta$ associated with the invariant subspace $\mathcal{A}+\mathcal{B}$ are 
obtained from the eigenvalues of $T$. 
%Indeed by Proposition \ref{thm: cycle-path}, for nonreversible case, we have 
%	 \[ \dim(\mathcal{A}+\mathcal{B})+\dim(\mathcal{A}+\mathcal{B})^\perp 
%         	= ({m}+{n})+(\nu-({m}+{n}))=\nu \]

Consider the reversible case. 
By (\ref{eq:Tachimachi}) and Lemma~\ref{lem:kerL}, if $e^{2i\phi}\neq -e^{-2i\theta}$, then 
	\begin{equation}\label{eq:Kasugamachi} 
        \ker\left((e^{2i\phi}-U_\theta)L\right)\setminus \ker L=\ker\left(e^{2i\phi}-\Lambda_\theta\right)=D\ker\left(\frac{\cos\phi}{\sin\theta}-T\right). 
        \end{equation}
Note that we obtain the eigenvalues of $U_\theta$ associated only with eigenvectors that do not belong to the kernel of $L$. Now we analyze the boundaries of the spectrum of $\Lambda_\theta$, which are $2\phi_+:=\pi-2\theta$ and $2\phi_-:=\pi+2\theta$. These boundaries exist only if $\theta\neq \pi/2$ as can be seen in Fig.~\ref{fig:mapping}. Still in the reversible case, we split the analysis into two cases.
\

\noindent{\bf Case $\theta\neq \pi/2$.} Counting the dimension of $\mathcal{A}+\mathcal{B}$ inherited from the eigenspace of $\Lambda_\theta$ except the eigenspace 
with the eigenvalue $e^{2i\phi_+}$, we have 
	\begin{align*}
        \sum_{\phi \neq \phi_+} \dim \ker\left(e^{2i\phi}-U_\theta|_{\mathcal{A}+\mathcal{B}}\right)
        	&= \sum_{\phi \neq \phi_+} \dim \left(D\ker\left(\frac{\cos\phi}{\sin\theta}-T\right)\right) \\
                &= |V_{\textrm{un}}|-\dim \ker(I-T) \\
                &= |V_{\textrm{un}}|-1.
        \end{align*}
On the other hand, since $T$ is reversible, then $\dim (\mathcal{A}+\mathcal{B})^\perp= b_1(G_{\textrm{un}})=|E_{\textrm{un}}|-|V_{\textrm{un}}|+1$ by (\ref{eq:Omachi}). 
If $e^{2i\phi_+}\in \sigma(U_\theta|_{\mathcal{A}+\mathcal{B}})$, then
$\sum_{\lambda}\dim(\ker(\lambda-U_\theta))>|E_{\textrm{un}}|=|V(G)|$, which is a contradiction. Then, $e^{2i\phi_+}\not\in \sigma(U_\theta|_{\mathcal{A}+\mathcal{B}})$.

\

\noindent{\bf Case $\theta=\pi/2$.} Eq.~(\ref{eq:Kasugamachi}) and the same results of the case $\theta\neq \pi/2$ hold in the present case, unless $\phi=0$, which implies that $\theta=\pi/2$.
When $\phi=0$, we have 
	\begin{equation} 
        \ker((I-U_{\pi/2})L)=\ker((I-\Lambda_{\pi/2})^2)=\ker((I-T)(I+T))=\ker(I-T)\oplus \ker L. 
        \end{equation}
The third expression is obtained by a Gaussian elimination and the final expression comes from Lemma~\ref{lem:kerL}. 
Using Lemma~\ref{lem:dimAB}, we obtain $\ker(I-U_{\pi/2}|_{\mathcal{A}+\mathcal{B}})=L\ker(I-T)=\mathbb{C}\,A\pi_1$.

Now we summarize the statements related to $\ker(\lambda-U|_{\mathcal{A}+\mathcal{B}})$:
\noindent
\begin{enumerate}
\item Non-reversible case: 
	\[ \ker(e^{2i\phi}-U|_{\mathcal{A}+\mathcal{B}})=LD\ker\left( \frac{\cos\phi}{\sin\theta}-T \right) \]
\item Reversible case:
\begin{enumerate}
\item If $\theta\neq \pi/2$, then
	\[ \ker(e^{2i\phi}-U|_{\mathcal{A}+\mathcal{B}})=\begin{cases}LD\ker\left( \frac{\cos\phi}{\sin\theta}-T \right) & \text{if $\phi\neq \phi_+$,} \\ 0 & \text{if $\phi=\phi_+$.} \end{cases} \]
\item If $\theta=\pi/2$, then
	\[ \ker(e^{2i\phi}-U|_{\mathcal{A}+\mathcal{B}})=\begin{cases}LD\ker\left( \cos\phi-T \right) & \text{if $\phi\neq 0$,} \\ \mathbb{C}A\pi_1 & \text{if $\phi=0$.} \end{cases} \]
\end{enumerate}
\end{enumerate}

%%%%
%%%%

%%%%%%%%%%%%%%
\subsection{Cycle-path space $(\mathcal{A}+\mathcal{B})^\perp$}\label{subsec:Cycle-path}
%%%%%%%%%%%%%%

In this subsection we address the subspace $\mathcal{A}^\perp \cap \mathcal{B}^{\perp}$, the dimension of which depends on the reversibility of $T$.  Using $$\dim(\mathcal{A}+\mathcal{B})=\dim(\mathcal{A})+\dim(\mathcal{B})-\dim(\mathcal{A}\cap\mathcal{B})$$  and Lemma~\ref{lem:dimAB}, we have 
	\begin{equation}
        \dim(\mathcal{A}^\perp \cap \mathcal{B}^\perp)
        =
        \begin{cases}
        \nu-{m}-{n}+1 & \text{if $T$ is reversible, }\\
        \nu-{m}-{n} & \text{otherwise. }
        \end{cases}
        \end{equation}
The dimension is expressed by the first Betti number $b_1(G_{\textrm{un}})$ of the underlying bipartite multigraph $G_{\textrm{un}}$, that is, 
	\begin{equation}\label{eq:Omachi}
        \dim(\mathcal{A}^\perp \cap \mathcal{B}^\perp)
        =
        \begin{cases}
        b_1(G_{\textrm{un}}) & \text{if $T$ is reversible, }\\
        b_1(G_{\textrm{un}}) - 1 & \text{otherwise, }
        \end{cases}
        \end{equation}
because the number of edges of $G_{\textrm{un}}$ is $\nu$ ($G$ is the line graph of $G_{\textrm{un}}$), the number of vertices of $G_{\textrm{un}}$ is $|\mathcal{T}_1|+|\mathcal{T}_2|=m+n$, and by definition $b_1(G_{\textrm{un}})=|E_{\textrm{un}}|-|V_{\textrm{un}}|+1$. In fact, the first Betti number is equal to the number of fundamental cycles. Here a fundamental cycle is the cycle generated by adding one edge of the original graph to the spanning tree. Since there is a one-to-one correspondence between the set of fundamental cycles and the set of edges not in the spanning tree, $b_1(G_{\textrm{un}})$ is equal to the number of edges of $G_{\textrm{un}}$ not in the spanning tree. The number of edges in the spanning tree of $G_{\textrm{un}}$ is $|V(G_{\textrm{un}})|-1=m+n-1$. Then, $b_1(G_{\textrm{un}})=\nu-m-n+1$.  

Let $\Gamma_{\textrm{un}}$ be a set of fundamental cycles of $G_{\textrm{un}}$. In the reversible case, there is a one-to-one correspondence between $\Gamma_{\textrm{un}}$ and a basis of the vector space $\mathcal{A}^\perp \cap \mathcal{B}^\perp$, which is isomorphic to the cycle space~\cite{GY05}. In Proposition~\ref{thm:cycle}, we show how to obtain an eigenvector of $U_\theta$ with eigenvalue $(-e^{-2i\theta})$ associated with a fundamental cycle.  In the nonreversible case, we have to fix one fundamental cycle and choose a second fundamental cycle and then we link these cycles with a path when they have no overlap forming a cycle-path subgraph. Since one cycle in $\Gamma_{\textrm{un}}$ remains fixed, this explains why there is a $(-1)$ in Eq.~(\ref{eq:Omachi}) in the nonreversible case. The vector space $\mathcal{A}^\perp \cap \mathcal{B}^\perp$ is not isomorphic to the cycle space.  In Proposition~\ref{thm:cycle-path}, we show how to obtain an eigenvector of $U_\theta$ with eigenvalue $(-e^{-2i\theta})$ associated with the cycle-path subgraph.

\begin{lemma}\label{lem:dimABperp}
If $\psi\in\mathcal{A}^\perp \cap \mathcal{B}^{\perp}$, then $\psi$ is an eigenfunction of $U_\theta$ with eigenvalue $(-e^{-2i\theta})$.
\end{lemma}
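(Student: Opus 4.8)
The plan is to show that $\mathcal{A}^\perp\cap\mathcal{B}^\perp$ is invariant under $U_\theta$ and that $U_\theta$ acts on it as the scalar $-e^{-2i\theta}$. First I would use the fact that $AA^\dagger$ and $BB^\dagger$ are the orthogonal projections onto $\mathcal{A}$ and $\mathcal{B}$, so that $\psi\in\mathcal{A}^\perp\cap\mathcal{B}^\perp$ is equivalent to $A^\dagger\psi=0$ and $B^\dagger\psi=0$, hence $AA^\dagger\psi=BB^\dagger\psi=0$. Plugging this into the definitions~(\ref{Eqs:H_A_H_B_1}) and~(\ref{Eqs:H_A_H_B_2}) gives $H_A\psi=-\psi$ and $H_B\psi=-\psi$ immediately.

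Next I would compute the action of the exponentials. Since $e^{i\theta H_A}=\cos\theta\,I_{\mathcal H}+i\sin\theta\,H_A$ (using $H_A^2=I_{\mathcal H}$), we get $e^{i\theta H_A}\psi=(\cos\theta-i\sin\theta)\psi=e^{-i\theta}\psi$; in particular $e^{i\theta H_A}\psi$ is still a scalar multiple of $\psi$, so it again lies in $\mathcal{A}^\perp\cap\mathcal{B}^\perp$, and the same computation gives $e^{i\theta H_B}(e^{-i\theta}\psi)=e^{-2i\theta}\psi$. Therefore, by the definition~(\ref{Eq:U_theta}) of the evolution operator, $U_\theta\psi=-e^{i\theta H_B}e^{i\theta H_A}\psi=-e^{-2i\theta}\psi$, which is exactly the claim; this also shows the subspace is $U_\theta$-invariant.

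There is essentially no obstacle here: the only point requiring a word of care is the observation that a $(-1)$-eigenvector of $H_A$ remains a $(-1)$-eigenvector after applying $H_B$-related operators, which is what makes both exponentials act as commuting scalars on $\psi$ rather than genuinely non-commuting unitaries. This is the same mechanism already used in the proof of Lemma~\ref{lem:dimAB} for the $\mathcal{A}\cap\mathcal{B}$ case (there $\psi$ is a $(+1)$-eigenvector of both $H_A$ and $H_B$, yielding eigenvalue $-e^{2i\theta}$); here the roles of $+1$ and $-1$ are swapped, giving $-e^{-2i\theta}$.
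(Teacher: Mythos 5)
Your proposal is correct and follows the same route as the paper: $\psi\in\mathcal{A}^\perp\cap\mathcal{B}^\perp$ is a $(-1)$-eigenvector of both $H_A$ and $H_B$, hence an $(e^{-i\theta})$-eigenvector of both exponentials, giving $U_\theta\psi=-e^{-2i\theta}\psi$. You merely spell out the intermediate steps (via $AA^\dagger\psi=BB^\dagger\psi=0$) that the paper's one-line proof leaves implicit.
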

\begin{proof}
If $\psi\in\mathcal{A}^\perp \cap \mathcal{B}^{\perp}$, then $\psi$ is a $(e^{-i\theta})$-eigenvector of $e^{i\theta H_A}$ and  $e^{i\theta H_B}$ because $H_A$ and $H_B$ are unitary and self-adjoint operators. Then, $\psi$ is an eigenfunction of $U_\theta(=-e^{i\theta H_B}e^{i\theta H_A})$ with eigenvalue $(-e^{-2i\theta})$.
\end{proof}

Since $G$ is the line graph of $G_{\textrm{un}}$, there is a bijection map $\eta:E_{\textrm{un}}\to V(G)$, which we use in the following propositions.

\begin{proposition}\label{thm:cycle}
Suppose that $T$ is reversible.
Then, for each $c\in \Gamma_{\textrm{un}}$, 
there is an eigenfunction $\psi_c$ in $\mathcal{A}^\perp \cap \mathcal{B}^\perp$ 
whose support is $\mathrm{supp}(\psi_c) = \{ \eta(e) \;|\; e\in E(c) \}.$
Moreover, 
\begin{align}
\mathcal{A}^\perp \cap \mathcal{B}^\perp
	= \ker\left(e^{-2i\theta}+U_{\theta}|_{\mathcal{A}^\perp \cap \mathcal{B}^\perp}\right)=
        \mathrm{span}\{ \psi_c \;|\; c\in \Gamma_{\textrm{un}}\}.
\end{align}
\end{proposition}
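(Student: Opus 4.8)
The plan is to construct each $\psi_c$ explicitly from the combinatorial data of the fundamental cycle $c$, using the reversible eigenfunction $\pi=\pi_1\oplus\pi_2$ supplied by Definition~\ref{def:DBC} and Lemma~\ref{lem:DBC}. Fix $c\in\Gamma_{\textrm{un}}$ and orient it as a closed walk $\gamma_0 \to e_1 \to \gamma_1 \to e_2 \to \cdots \to e_{2k} \to \gamma_{2k}=\gamma_0$ in $G_{\textrm{un}}$, where the vertices alternate between $\mathcal{T}_1$ and $\mathcal{T}_2$ (possible since $G_{\textrm{un}}$ is bipartite, so every cycle has even length). For each edge $e_\ell$ of $c$, set $u_\ell:=\eta(e_\ell)\in V(G)$, and define $\psi_c$ to be supported on $\{u_1,\dots,u_{2k}\}$ with $\langle u_\ell|\psi_c\rangle$ equal to an alternating-sign expression built from $a(u_\ell)$, $b(u_\ell)$, and the values $\pi(\gamma_{\ell-1})$, $\pi(\gamma_\ell)$. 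The correct normalization is dictated by the requirement $\psi_c\perp|\alpha_i\rangle$ and $\psi_c\perp|\beta_j\rangle$ for every polygon, i.e.\ $A^\dagger\psi_c=0$ and $B^\dagger\psi_c=0$; by Lemma~\ref{lem:dimABperp} this containment in $\mathcal{A}^\perp\cap\mathcal{B}^\perp$ automatically makes $\psi_c$ an eigenfunction of $U_\theta$ with eigenvalue $(-e^{-2i\theta})$, so no separate eigenvalue computation is needed.

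First I would verify $A^\dagger\psi_c=0$: for a fixed $\alpha_i\in\mathcal{T}_1$, the sum $\langle\alpha_i|\psi_c\rangle=\sum_{u\in\alpha_i}\overline{a(u)}\langle u|\psi_c\rangle$ only picks up the (at most two) consecutive edges of the cycle $c$ incident to the vertex $\alpha_i$ of $G_{\textrm{un}}$; if $\alpha_i$ is not a vertex of $c$ the sum is empty, and if it is, the two terms cancel precisely because of the alternating signs together with the QDB relation $a(u)\pi(\mathcal{T}_1(u))=b(u)\pi(\mathcal{T}_2(u))$. The check $B^\dagger\psi_c=0$ is symmetric. The subtlety here, and the step I expect to be the main obstacle, is bookkeeping the multi-edge case: when $\alpha_i$ and $\beta_j$ are joined by several parallel edges in $G_{\textrm{un}}$ and the cycle $c$ uses two of them at a degree-$2$ visit, the "two consecutive edges at $\alpha_i$" may be parallel edges rather than edges to distinct neighbors, and one must confirm the cancellation still works and that $\pi$ is genuinely nonvanishing on the vertices of $c$ (which follows from the Perron–Frobenius part of the proof of Lemma~\ref{lem:DBC}, since $\zeta=|\pi|^2$ is the strictly positive stationary measure of an irreducible chain, at least on each connected component).

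Having produced one nonzero $\psi_c\in\mathcal{A}^\perp\cap\mathcal{B}^\perp$ for each $c\in\Gamma_{\textrm{un}}$, the final equality is a dimension count. By Lemma~\ref{lem:dimABperp} we have $\mathcal{A}^\perp\cap\mathcal{B}^\perp\subseteq\ker(e^{-2i\theta}+U_\theta|_{\mathcal{A}^\perp\cap\mathcal{B}^\perp})$, and trivially the reverse inclusion holds, so the first equality is immediate; it remains to show $\{\psi_c \;|\; c\in\Gamma_{\textrm{un}}\}$ spans. For this I would use the standard fact that the fundamental cycles form a basis of the cycle space of $G_{\textrm{un}}$: each $\psi_c$ is supported on the edge set of $c$, and the map $c\mapsto\psi_c$ intertwines the edge-incidence structure so that a linear dependence among the $\psi_c$ would force one among the cycle vectors; hence the $\psi_c$ are linearly independent, giving $\dim\mathrm{span}\{\psi_c\}\geq b_1(G_{\textrm{un}})$. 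But Eq.~(\ref{eq:Omachi}) says $\dim(\mathcal{A}^\perp\cap\mathcal{B}^\perp)=b_1(G_{\textrm{un}})$ in the reversible case, so the span must be all of $\mathcal{A}^\perp\cap\mathcal{B}^\perp$, completing the proof. The only genuinely delicate point is the explicit construction and the multi-edge cancellation in the first paragraph; the rest is linear algebra and the already-established dimension formula.
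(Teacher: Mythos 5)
Your proposal is correct and follows essentially the same route as the paper: construct an explicit function supported on the edges of each fundamental cycle, check it is orthogonal to every polygon vector via the local balance sums at the vertices of $G_{\textrm{un}}$, obtain the eigenvalue $-e^{-2i\theta}$ from Lemma~\ref{lem:dimABperp}, and conclude by linear independence plus the dimension count of Eq.~(\ref{eq:Omachi}). The only (minor) difference is that you write the entries directly in terms of the reversible eigenfunction $\pi$ so that the cancellation is local at each vertex, whereas the paper solves the balance equations recursively around the cycle and uses the quantum detailed balance only to verify that the construction closes consistently.
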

\begin{proof}
Define $\mathcal{U}_\eta: \ell^2(E_{\textrm{un}})\to \ell^2(V(G))$ so that $\mathcal{U}_\eta(\psi)(u)=\psi(\eta^{-1}(u))$.
Note that $\mathcal{U}_\eta (\psi)\in \mathcal{A}^\perp$ if and only if 
\begin{equation}\label{eq:a_balance}
\sum_{e:\mathcal{T}_1(\eta(e))=\alpha}\overline{a}(\eta(e))\,\psi(e)=0
\end{equation}
for $\alpha\in \mathcal{T}_1$, where the sum runs over the edges incident to $\alpha\in \mathcal{T}_1$ in $G_{\textrm{un}}$ and $\overline{a}(\eta(e))$ is the complex conjugate of ${a}(\eta(e))$.
$\mathcal{U}_\eta(\psi)\in \mathcal{B}^\perp$ if and only if 
\begin{equation}\label{eq:b_balance}
\sum_{e:\mathcal{T}_2(\eta(e))=\beta}\overline{b}(\eta(e))\,\psi(e)=0
\end{equation}
for $\beta\in \mathcal{T}_2$, where the sum runs over the edges incident to $\beta\in \mathcal{T}_2$ in $G_{\textrm{un}}$. We use Eqs.~(\ref{eq:a_balance}) and~(\ref{eq:b_balance}) to obtain the entries of an eigenfuntion with support on a fundamental cycle.
From now on, we consider space $\ell^2(E_{\textrm{un}})$, which is lifted to $\ell^2(V(G))$ by the unitary map $\mathcal{U}_\eta$. For the sake of simplicity, we denote $a(\eta(e))$ by $a(e)$ and $b(\eta(e))$ by $b(e)$ for $e\in E_{\textrm{un}}$. 

Let $c\in\Gamma_{\textrm{un}}$. Since $G_{\textrm{un}}$ is a bipartite multigraph, $c$ is an even cycle. 
In the following, we describe an important property of a reversible measure on $V_{\textrm{un}}$ using this cycle and then we construct an eigenfunction $\psi_c\in \ell^2(E_{\textrm{un}})$ so that $\mathcal{U}_\eta(\psi_c)\in \mathcal{A}^\perp \cap \mathcal{B}^\perp$ and whose support is $E(c)=\{e_1,e_2,\dots,e_{2k}\}$. The vertices are labeled by $u_1=e_{1}\cap e_{2}$, $u_2=e_2\cap e_3$, $\cdots$, $u_{2k}=e_{2k}\cap e_{1}$ and $u_{2j}\in \mathcal{T}_1$ and $u_{2j-1}\in \mathcal{T}_2$ for $j=1,\dots, k$. 

Since $T$ is reversible, there is a reversible measure $\pi$ on $V_{\textrm{un}}$. 
Redefining $\pi$ so that $\pi(u_{2k})=1$ and considering the vertices $u_{2k}$ and $u_1$ connected by the edge $e_1$, 
the quantum detailed balance equation of Def.~\ref{def:DBC} implies that $a(e_1)\pi(u_{2k})=b(e_1)\pi(u_1)$, which simplifies to $\pi(u_1)=a(e_1)/b(e_1)$. 
Now we consider the neighboring vertices $u_1$ and $u_2$ connected by the edge $e_2$. We have $b(e_2)\pi(u_1)=a(e_2)\pi(u_2)$, which simplifies to $\pi(u_2)=b(e_2)a(e_1)/a(e_2)b(e_1)$. 
We proceed systematically considering the edges of the cycle $c$ until the final edge $e_{2k}$, which must satisfy 
	\begin{equation}\label{eq:cycleRev} 
        \frac{b(e_{2k})a(e_{2k-1})\cdots b(e_2)a(e_1)}{a(e_{2k})b(e_{2k-1})\cdots a(e_{2})b(e_1)}=\pi(u_{2k})=1. 
        \end{equation}

Now we describe the procedure that generates the eigenfunction $\mathcal{U}_\eta(\psi_c)\in \mathcal{A}^\perp\cap \mathcal{B}^\perp$ so that $\mathrm{supp}(\psi_c) =  E(c) .$
Setting $\psi_c(e_1)=1$ and considering the edges $e_1$ and $e_2$ with the common vertex $u_1$, 
Eq.~(\ref{eq:b_balance}) implies that $\bar{b}(e_1){\psi_c}(e_1)+\bar{b}(e_2){\psi_c}(e_2)=0$ because $\mathcal{U}_\eta(\psi_c)\in \mathcal{B}^\perp$, which simplifies to ${\psi_c}(e_2)=-\bar{b}(e_1)/\bar{b}(e_2)$. 
Next, since $\mathcal{U}_\eta(\psi_c)\in \mathcal{A}^\perp$, considering the edges $e_2$ and $e_3$ with the common vertex $u_2$, Eq.~(\ref{eq:a_balance}) implies that $\bar{a}(e_2){\psi_c}(e_2)+\bar{a}(e_3){\psi_c}(e_3)=0$, 
which simplifies to ${\psi_c}(e_3)=\bar{a}(e_2)\bar{b}(e_1)/\bar{a}(e_3)\bar{b}(e_2)$. 
We proceed systematically until the final vertex $u_{2k}$. 
Then, we close the cycle with no conflict because if we take one step further, we use (\ref{eq:cycleRev} and we obtain the consistency equation
	\[ \frac{\bar{a}(e_{2k})\bar{b}(e_{2k-1})\cdots \bar{a}(e_{2})\bar{b}(e_1)}{\bar{a}(e_1)\bar{b}(e_{2k})\cdots \bar{a}(e_{3})\bar{b}(e_2)}={\psi_c}(e_1)=1. \]
Summing up, the even entries of the eigenfunction are
\begin{equation}\label{eq:psie2j}
\psi_c(e_{2j})=-\frac{\bar{b}(e_{2j-1})\cdots \bar{a}(e_{2})\bar{b}(e_1)}{\bar{b}(e_{2j})\cdots \bar{a}(e_3)\bar{b}(e_2)},
\end{equation}
and the odd entries of the eigenfunction are
\begin{equation}\label{eq:psie2j+1}
\psi_c(e_{2j+1})=\frac{\bar{a}(e_{2j})\bar{b}(e_{2j-1})\cdots \bar{a}(e_{2})\bar{b}(e_1)}{\bar{a}(e_{2j+1})\bar{b}(e_{2j})\cdots \bar{a}(e_3)\bar{b}(e_2)}.
\end{equation}
Note that by construction 
$\mathcal{U}_\eta(\psi_c)\in \mathcal{A}^\perp \cap \mathcal{B}^\perp$ and $\mathrm{supp}(\mathcal{U}_\eta(\psi_c))= \{ \eta(e) \;|\; e\in E(c) \}.$
Using Lemma~\ref{lem:dimABperp}, we conclude that the above procedure generates a linearly independent eigenfunction with eigenvalue $(-e^{-2i\theta})$ for each fundamental cycle. Since the number of fundamental cycles is equal to $\dim(\mathcal{A}^\perp\cap \mathcal{B}^\perp)$ in the reversible case, the set of the eigenfunctions $\mathcal{U}_\eta(\psi_c)$ is an eigenbasis of $\mathcal{A}^\perp\cap \mathcal{B}^\perp$. 
\end{proof}

In the nonreversible case, we need to construct a graph using two fundamental cycles.

\

\noindent\textbf{Construction 1.} 
Let $c_0$ and $c$ be fundamental cycles. Define graph $G_c^{c_0}$, subgraph of $G_{\textrm{un}}$, obeying the following rules: (1)~If $V(c_0)\cap V(c)\neq\emptyset$, then $G_c^{c_0}=(V(c_0)\cup V(c),E(c_0)\cup E(c))$, that is, $G_c^{c_0}$ is the union of $c_0$ and $c$, and (2)~if $V(c_0)\cap V(c)= \emptyset$, then $G_c^{c_0}$ is the union of $c_0$, $c$, and a path $p$ connecting $c_0$ and $c$ so that $E(p)\cap (E(c_0)\cup E(c))=\emptyset$. 

\begin{proposition}\label{thm:cycle-path}
Suppose that $T$ is nonreversible.
Let $c_0$ and $c$ be cycles in $\Gamma_{\textrm{un}}$ and let $G_c^{c_0}$ be a graph obtained from Construction~1.
Then, for each $c\in \Gamma_{\textrm{un}}\setminus\{c_0\}$, there is an eigenfunction $\psi_c^{c_0}$ in $\mathcal{A}^\perp \cap \mathcal{B}^\perp$ whose support is $\mathrm{supp}(\psi_c^{c_0})= \{ \eta(e) \;|\; e\in E(G_c^{c_0}) \}.$
Moreover, 
\begin{align}
\mathcal{A}^\perp \cap \mathcal{B}^\perp
	= \ker\left(e^{-2i\theta}+U_{\theta}|_{\mathcal{A}^\perp \cap \mathcal{B}^\perp}\right)= 
        \mathrm{span}\{ \psi_c^{c_0} \;|\; c\in \Gamma_{\textrm{un}}\setminus\{c_0\}\}.
\end{align}
\end{proposition}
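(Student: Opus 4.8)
The plan is to mimic the construction in the proof of Proposition~\ref{thm:cycle}, but since $T$ is now nonreversible there is no reversible measure $\pi$ on $V_\textrm{un}$ to steer the construction. The key observation is that the obstruction to closing a single cycle without conflict is exactly the failure of the consistency relation~(\ref{eq:cycleRev}): for a fundamental cycle $c$ with edges $e_1,\dots,e_{2k}$, running the balance equations~(\ref{eq:a_balance}) and~(\ref{eq:b_balance}) around $c$ starting from $\psi(e_1)=1$ forces, upon returning, the value
\[
r(c):=\frac{\bar{a}(e_{2k})\bar{b}(e_{2k-1})\cdots \bar{a}(e_2)\bar{b}(e_1)}{\bar{a}(e_1)\bar{b}(e_{2k})\cdots \bar{a}(e_3)\bar{b}(e_2)},
\]
and we can close $c$ alone only if $r(c)=1$; in the nonreversible case this generically fails. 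So first I would introduce this ``holonomy'' $r(c)$ (a nonzero complex number) attached to each cycle, and record how it transforms when one pushes a value along a path: traversing a path $p$ from a vertex $x$ to a vertex $y$ multiplies the amplitude by a path-dependent nonzero factor $\rho(p)$ built from the $\bar a$'s and $\bar b$'s along $p$, with $\rho(p)\rho(\bar p)=1$ for the reverse path.

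Next I would carry out the construction on $G_c^{c_0}$. Fix the reference cycle $c_0$ and the second cycle $c$, joined (if disjoint) by the connecting path $p$ as in Construction~1. Start by running the balance conditions around $c_0$ with an as-yet-undetermined scale $s_0$, obtaining values on $E(c_0)$ that, when one tries to close $c_0$, are off by the factor $r(c_0)$ — i.e.\ there is a ``defect'' concentrated at one vertex of $c_0$. Do the same on $c$ with scale $s_1$, producing a defect $r(c)$ at one vertex of $c$. The point is that the path $p$ lets the two defects cancel: propagating the value from the defect vertex of $c_0$ along $p$ into $c$ and using the balance equations at the internal vertices of $p$ forces a linear relation between $s_0$ and $s_1$ of the form $s_1=\kappa\, s_0$ for an explicit nonzero $\kappa$ depending on $r(c_0)$, $r(c)$, and $\rho(p)$; choosing $s_0,s_1$ accordingly makes \emph{both} defects vanish simultaneously, so all of Eqs.~(\ref{eq:a_balance}) and~(\ref{eq:b_balance}) hold and $\mathcal{U}_\eta(\psi_c^{c_0})\in\mathcal{A}^\perp\cap\mathcal{B}^\perp$ with support $E(G_c^{c_0})$. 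Here the essential input is that both $c_0$ and $c$ are even cycles (bipartiteness of $G_\textrm{un}$) and that $p$ meets each of them in exactly one vertex, so the bookkeeping of ``which tessellation a vertex belongs to'' stays consistent along the path; the case $V(c_0)\cap V(c)\neq\emptyset$ is the degenerate version where $p$ is trivial and the two defects sit at a common vertex.

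Finally I would verify that $\{\psi_c^{c_0}\mid c\in\Gamma_\textrm{un}\setminus\{c_0\}\}$ is a basis of $\mathcal{A}^\perp\cap\mathcal{B}^\perp$. Linear independence follows from supports: restricting $\psi_c^{c_0}$ to the edge of $c$ not in the spanning tree, its value is nonzero (it equals $s_1$ times a product of nonzero factors, by construction), and these ``private'' edges are distinct for distinct $c$, so no nontrivial combination can vanish. Together with Lemma~\ref{lem:dimABperp}, which guarantees each $\psi_c^{c_0}$ is an eigenfunction with eigenvalue $-e^{-2i\theta}$, and with the dimension count $\dim(\mathcal{A}^\perp\cap\mathcal{B}^\perp)=b_1(G_\textrm{un})-1=|\Gamma_\textrm{un}|-1$ from~(\ref{eq:Omachi}), the spanning claim and the displayed equalities follow. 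I expect the main obstacle to be the middle step: writing down the defect-cancellation relation cleanly enough that one can \emph{prove} $\kappa\neq 0$ in all cases — in particular handling the multigraph subtleties (parallel edges between the same pair of tessellation-vertices, and the requirement $E(p)\cap(E(c_0)\cup E(c))=\emptyset$) without the crutch of the reversible measure that made Proposition~\ref{thm:cycle} so transparent.
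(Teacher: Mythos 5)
Your proposal follows essentially the same route as the paper's proof: the cycle holonomy you call $r(c)$ is the paper's balancing index $\Delta(c)$, the relative scale on the second cycle is fixed so that the defect of $c_0$, propagated along the connecting path, cancels against $\Delta(c)$ — exactly Eqs.~(\ref{eq:psi_g1}), (\ref{eq:g2l}), (\ref{eq:condition}), (\ref{eq:condition2}) and~(\ref{Eq:psi_c(f1)}) — and the basis claim then follows from the support/private-edge argument, Lemma~\ref{lem:dimABperp}, and the dimension count~(\ref{eq:Omachi}). The only details to tighten are in Case~2, where $V(c_0)\cap V(c)$ can be a whole shared path rather than a single vertex (the paper handles this by superposing $\psi^{c_0}+\psi_c$ with the same scale condition, i.e.\ $\kappa(p)=1$), and the nonvanishing you worry about, which amounts to $\Delta(c)\neq 0$ and is likewise simply asserted in the paper for the nonreversible case.
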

\begin{proof}
Let $c_0$ and $c$ be fundamental cycles.
The vertices of $c_0$ and $c$ are labeled by $\{ x_1,\dots,x_{2k} \}$ and $\{ y_1,\dots,y_{2k'} \}$, respectively, and the
edges of $c_0$ are $e_1=\{ x_{2k},x_1 \},\;e_2=\{ x_1,x_2 \},\dots,e_{2k}=\{ x_{2k-1},x_{2k} \}$, and the edges of $c$ are
$f_1=\{ y_{2k'},y_1 \},\;f_2=\{ y_1,y_2 \},\dots,f_{2k'}=\{ y_{2k'-1},y_{2k'} \}$,  where $x_{2j}\in \mathcal{T}_1$ and $x_{2j-1}\in \mathcal{T}_2$. Let $\mathcal{U}_\eta$ be the operator defined in the proof of Proposition~\ref{thm:cycle}.

\

\noindent{\bf Case 1.}
Suppose that $V(c_0)\cap V(c)= \emptyset$. Then, there is a path $p$ connecting the two cycles from $x_{2k}\in V(c_0)$ to $y_{2k'}\in V(c)$. 
Denote the vertices of the path by $\{ x_{2k}=z_1,z_2,\dots,z_{\ell+1}=y_{2k'} \}$ and the edges by 
$g_1=\{z_1,z_2\},g_2=\{z_2,z_3\},\dots,g_{\ell}=\{z_{\ell},z_{\ell+1}\}$. 

Now we construct function $\psi_c^{c_0}$ whose support is $E(G_c^{c_0})$. Let us define balancing indices $\Delta(c_0)$ and $\Delta(c)$ for cycles $c_0$ and $c$, so that 
\begin{align*} 
\Delta(c_0) &= \frac{a(e_{2k})b(e_{2k-1})\cdots a(e_{2})b(e_1)}{b(e_{2k})a(e_{2k-1})\cdots b(e_2)a(e_{1})}-1, \\
\Delta(c) &= \frac{a(f_{2k'})b(f_{2k'-1})\cdots a(f_{2})b(f_1)}{b(f_{2k'})a(f_{2k'-1})\cdots b(f_2)a(f_{1})}-1. 
\end{align*}
Note that in the reversible case we would have $\Delta(c_0)=\Delta(c)=0$. Since we are addressing the nonreversible case, we do have $\Delta(c_0)=\Delta(c)\neq 0$. 
Using the same procedure of Proposition~\ref{thm:cycle}, we start from vertex $x_1$ with $\psi_c^{c_0}(e_1)=1$ and proceed until $x_{2k-1}$.  The entries of $\psi_c^{c_0}$ on $c_0$ are given by Eqs.~(\ref{eq:psie2j}) and~(\ref{eq:psie2j+1}).

The last vertex $x_{2k}$ has three incident edges in $G_c^{c_0}$ and Eq.~(\ref{eq:a_balance}) implies that $\bar{a}(e_1)\psi_c^{c_0}(e_1)+\bar{a}(e_{2k})\psi_c^{c_0}(e_{2k})+\bar{a}(g_1)\psi_c^{c_0}(g_1)=0$ because $\mathcal{U}_\eta(\psi_c^{c_0})\in \mathcal{A}^\perp$. Using the expression of $\psi_c^{c_0}(e_{2k})$ and the balancing index $\Delta(c_0)$, we obtain
\begin{equation}\label{eq:psi_g1}
\bar{\psi}_c^{c_0}(g_1)=\frac{a(e_1)}{a(g_1)}\Delta(c_0).
\end{equation}
Continuing to use the procedure from $z_2$ going along the path $p$ until $z_{\ell}$, 
we obtain 
\begin{equation}\label{eq:g2l}
\bar{\psi}_c^{c_0}(g_{\ell})=
\begin{cases} -\frac{\kappa(p)a(e_1)\Delta(c_0)}{a(g_{\ell})} &\mbox{if } \ell \mbox{ is even,}  
\vspace{7pt} \\
+\,\frac{\kappa(p)a(e_1)\Delta(c_0)}{b(g_{\ell})}\frac{}{} & \mbox{if } \ell \mbox{ is odd,} \end{cases}
\end{equation}
where
\[ 
\kappa(p)=
\begin{cases} \frac{a(g_{\ell})b(g_{\ell-1})\cdots a(g_2)b(g_1)}{b(g_{\ell})a(g_{\ell-1})\cdots b(g_2)a(g_1)} &\mbox{if } \ell \mbox{ is even,}  
\vspace{7pt} \\
\frac{b(g_{\ell})a(g_{\ell-1})\cdots a(g_2)b(g_1)}{a(g_{\ell})b(g_{\ell-1})\cdots b(g_2)a(g_1)} & \mbox{if } \ell \mbox{ is odd.} \end{cases} \]
%%%\[ \kappa(p)=\frac{a(g_{\ell})b(g_{\ell-1})\cdots a(g_2)b(g_1)}{b(g_{\ell})a(g_{\ell-1})\cdots b(g_2)a(g_1)}. \]

Now we restart the procedure in order to find the entries of $\psi_c^{c_0}$ on the cycle $c$. We start from $y_1$ with an arbitrary $\psi_c^{c_0}(f_1)$ and proceed until $y_{2k'-1}$. The goal is to obtain $\psi_c^{c_0}(f_1)$ that consistently closes the process. Suppose that the length $\ell$ of path $p$ is even. Then, $y_{2j}\in \mathcal{T}_1$ and $y_{2j-1}\in \mathcal{T}_2$. In this case, the tessellations of cycles $c_0$ and $c$ are symmetric and we can use the result of Eq.~(\ref{eq:psi_g1}) by replacing $g_1$ by $g_{\ell}$ and  edges $e$ by edges $f$, that is, 
\begin{equation}\label{eq:psig_ell}
 \bar{\psi}_c^{c_0}(g_{\ell})=\frac{a(f_1)}{a(g_{\ell})}\Delta(c)\bar{\psi}_c^{c_0}(f_1). 
 \end{equation} 
In order to obtain a consistent result, we use Eq.~(\ref{eq:g2l}) (even $\ell$) to determine $\bar{\psi}_c^{c_0}(f_1)$, which is the only one missing. The result is 
	\begin{equation}\label{eq:condition}
        \bar{\psi}_c^{c_0}(f_1)=-\kappa(p)\frac{a(e_1)\Delta(c_0)}{a(f_1)\Delta(c)}.
        \end{equation}

Suppose that the length of path $p$ is odd, where the length is the number of edges. Then, $y_{2j-1}\in \mathcal{T}_1$ and $y_{2j}\in \mathcal{T}_2$. In this case, the tessellations of cycles $c_0$ and $c$ are antisymmetric and we have to recalculate Eq.~(\ref{eq:psig_ell}). The new result is  
	\[ \bar{\psi}_c^{c_0}(g_{\ell})=\frac{b(f_1)}{b(g_{\ell})}\tilde{\Delta}(c)\bar{\psi}_c^{c_0}(f_1), \]
where $\tilde{\Delta}(c)$ is obtained from $\Delta(c)$ by interchanging $a$ and $b$.
Again, in order to obtain a consistent result, we use Eq.~(\ref{eq:g2l}) (odd $\ell$) to determine the $\bar{\psi}_c^{c_0}(f_1)$. The result is 
	\begin{equation}\label{eq:condition2}
        \bar{\psi}_c^{c_0}(f_1)=\kappa(p)\frac{a(e_1)\Delta(c_0)}{b(f_1)\tilde{\Delta}(c)}.
        \end{equation}
By construction $\mathcal{U}_\eta(\psi_c^{c_0})\in \mathcal{A}^\perp \cap \mathcal{B}^\perp$ and $\mathrm{supp}(\psi_c^{c_0})=  E(G_c^{c_0}) .$

\

\noindent{\bf Case 2.}
Suppose that $V(c_0)\cap V(c)\neq \emptyset$. \red{In this case, the intersection of the two fundamental cycles of $G_c^{c_0}$ is either a vertex or a path~\cite{Hoe92}.}
%\red{if not, that is, the intersection consists of disjoint union of paths, then it is contradiction to the definition of the fundamental cycle.}
The path length is odd if $|V(c_0)\cap V(c)|$ is even and the path length is even if $|V(c_0)\cap V(c)|$ is odd. In both situations, we start with the cycle $c_0$ using the same procedure of Case 1, but this time there are two degree-3 vertices (or one degree-4 if $|V(c_0)\cap V(c)|=1$). As before, we assume that $\psi^{c_0}(e_1)=1$ and proceed along the edges of $c_0$. Then, we restart the procedure with cycle $c$ with arbitrary $\psi_c(f_1)$ and proceed along $c$.  The entries of $\psi^{c_0}$ on $c_0$ are given by Eqs.~(\ref{eq:psie2j}) and~(\ref{eq:psie2j+1}) and  the entries of $\psi_c$ on $c$ are given by Eqs.~(\ref{eq:psie2j}) and~(\ref{eq:psie2j+1}) after multiplying the right-hand side by $\psi_c(f_1)$ and after interchanging $e_{2j}$ and $e_{2j+1}$ by $f_{2j}$ and $f_{2j+1}$, respectively. There is an identification of edges of $c_0$ and $c$ along the path. The eigenvector with support on the union of the cycles is $\psi_c^{c_0}=\psi^{c_0}+\psi_c$. In order to close the set of all equations based on Eqs.~(\ref{eq:a_balance}) and~(\ref{eq:b_balance}), we have to set
\begin{equation}\label{Eq:psi_c(f1)}
\bar{\psi}_c(f_1)=-\frac{a(e_1)\Delta(c_0)}{a(f_1)\Delta(c)},
\end{equation}
which is the same as the one given by Eq.~(\ref{eq:condition}) if we take $\kappa(p)=1$.

Note that by construction 
$\mathcal{U}_\eta(\psi_c^{c_0})\in \mathcal{A}^\perp \cap \mathcal{B}^\perp$ and $\mathrm{supp}(\mathcal{U}_\eta(\psi_c^{c_0}))= \{ \eta(e) \;|\; e\in E(G_c^{c_0}) \}.$
Using Lemma~\ref{lem:dimABperp}, we conclude that the above procedure generates a linearly independent eigenfunction with eigenvalue $(-e^{-2i\theta})$ for each fundamental cycle $c\in \Gamma_{\textrm{un}}\setminus\{c_0\}$. Since the number of fundamental cycles in $\Gamma_{\textrm{un}}\setminus\{c_0\}$ is equal to $\dim(\mathcal{A}^\perp\cap \mathcal{B}^\perp)$ in the nonreversible case, the set of the eigenfunctions $\mathcal{U}_\eta(\psi_c^{c_0})$ is an eigenbasis of $\mathcal{A}^\perp\cap \mathcal{B}^\perp$. 
\end{proof}

\section{Main theorem}\label{sec:maintheo}

Suppose we have defined a 2-tessellable quantum walk as described in Section~\ref{Sec:SQW}. We also assume that the quantities and the notations of Secs.~\ref{eigvals} and~\ref{eigvecs} are known. For instance, $\phi$ and $\phi_\pm$ are defined as $\cos\phi(\mu)=\mu\sin\theta$ and $\phi_{\pm}=\pi/2\mp \theta$. 
We summarize our results in the following theorem. 
\begin{theorem}\label{thm:main}
$U_\theta$ can be decomposed into $U_\theta|_{\mathcal{A}+\mathcal{B}}\oplus U_\theta|_{(\mathcal{A}+\mathcal{B})^{\perp}}$ 
under the decomposition of $\mathcal{H}=(\mathcal{A}+\mathcal{B})\oplus (\mathcal{A}+\mathcal{B})^\perp$. 
Each invariant space is further decomposed as follows.
\begin{enumerate}
\item Space $\mathcal{A}+\mathcal{B}$. \\
\begin{align*} 
\sigma(U_\theta|_{\mathcal{A}+\mathcal{B}}) 
	&= \begin{cases}
        \{e^{2i\phi(\mu)} \;|\; \mu\in \sigma(T)\} & \text{if $T$ is nonreversible,}\\
        \{e^{2i\phi(\mu)} \;|\; \mu\in \sigma(T)\setminus\{1\}\} & \text{if $T$ is reversible and $\theta\neq \pi/2$,}\\
        \{e^{2i\phi(\mu)} \;|\; \mu\in \sigma(T)\setminus\{-1\}\} & \text{if $T$ is reversible and $\theta= \pi/2$.}
          \end{cases} 
\end{align*}
	\begin{multline*} 
	\ker(U_\theta|_{\mathcal{A}+\mathcal{B}}-e^{2i\phi})=
	\\ \hspace{0.5cm}\begin{cases} 
        \left\{Af+ie^{i(\theta+\phi)}Bg \;|\; f\oplus g\in \ker\left(\frac{\cos\phi}{\sin\theta} -T\right)\right\} & \text{if $\phi\neq \phi_-$,}\\
        \mathcal{A}\cap\mathcal{B}=\mathbb{C}A\pi_1 & \text{if $\phi=\phi_-$ and $T$ is reversible, }\\
        0 & \text{if $\phi=\phi_-$ and $T$ is nonreversible,}
        \end{cases}
\end{multline*}
where $\pi_1\oplus \pi_2\in \ker(1-T)$, which is a reversible measure of $T$.
\item Space $(\mathcal{A}+\mathcal{B})^\perp$.\\
\begin{equation*}
        \sigma(U_\theta|_{(\mathcal{A}+\mathcal{B})^\perp})
        	=\begin{cases}
                \emptyset & \text{if $G_{\textrm{un}}$ is a tree or ``$b(G_{\textrm{un}})=1$ and $T$ is nonreversible'',}\\
                -e^{-2i\theta} & \text{otherwise.}
                \end{cases}
        \end{equation*}
%where $b_1(G_{\textrm{un}})$ is the first Betti number of $G_{\textrm{un}}$, that is, $b_1(G_{\textrm{un}})=|E_{\textrm{un}}|-|V_{\textrm{un}}|+1$. The eigenspace is generated by all the fundamental cycles of $G_{\textrm{un}}$ as follows: setting $\Gamma(G_{\textrm{un}})$ as a set of all the fundamental cycles of $G_{\textrm{un}}$, then we have 
\begin{equation*}
        \ker(U_\theta|_{(\mathcal{A}+\mathcal{B})^\perp}+e^{-2i\theta}) 
        	= \begin{cases}
                \mathrm{span}\{ \psi_{c} \;| \; c\in \Gamma(G_{\textrm{un}}) \} & \text{if $T$ is reversible,} \\
                \mathrm{span}\{ \psi_{c}^{c_0} \;| \;c\in \Gamma(G_{\textrm{un}})\setminus\{c_0\} \} & \text{otherwise.}
                \end{cases}
        \end{equation*}
If $T$ is reversible, then the dimension of $(\mathcal{A}+\mathcal{B})^\perp$ is $\nu-({m}+{n})+1$, 
otherwise, $\nu-({m}+{n})$. 
\end{enumerate}
\end{theorem}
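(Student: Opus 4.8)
The plan is to assemble the theorem directly from the structural results of Sections~\ref{eigvals} and~\ref{eigvecs}, treating the two $U_\theta$-invariant blocks in turn and then checking that the eigenvector counts add up to $\nu=|V(G)|$, which certifies that nothing has been missed. For the block decomposition itself: on $(\mathcal{A}+\mathcal{B})^\perp=\mathcal{A}^\perp\cap\mathcal{B}^\perp$ one has $H_A=H_B=-I$, hence $e^{i\theta H_A}=e^{i\theta H_B}=e^{-i\theta}I$ and $U_\theta=-e^{-2i\theta}I$ there; this is Lemma~\ref{lem:dimABperp}, and it shows $(\mathcal{A}+\mathcal{B})^\perp$ is $U_\theta$-invariant with $U_\theta$ acting as the scalar $-e^{-2i\theta}$. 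Since $U_\theta$ is unitary, the orthogonal complement $\mathcal{A}+\mathcal{B}$ is also invariant, giving the splitting $U_\theta=U_\theta|_{\mathcal{A}+\mathcal{B}}\oplus U_\theta|_{(\mathcal{A}+\mathcal{B})^\perp}$.

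For the cycle-path block, $\sigma(U_\theta|_{(\mathcal{A}+\mathcal{B})^\perp})$ is therefore $\{-e^{-2i\theta}\}$ when this subspace is nontrivial and $\emptyset$ otherwise. Its dimension is $b_1(G_{\textrm{un}})$ when $T$ is reversible and $b_1(G_{\textrm{un}})-1$ when $T$ is nonreversible, by Eq.~(\ref{eq:Omachi}); since a negative dimension is impossible, $b_1(G_{\textrm{un}})=0$ forces reversibility, so the subspace is trivial exactly when $G_{\textrm{un}}$ is a tree or when $b_1(G_{\textrm{un}})=1$ and $T$ is nonreversible, which is the stated dichotomy. The explicit eigenbasis $\{\psi_c : c\in\Gamma_{\textrm{un}}\}$ (reversible) or $\{\psi_c^{c_0} : c\in\Gamma_{\textrm{un}}\setminus\{c_0\}\}$ (nonreversible) is furnished by Propositions~\ref{thm:cycle} and~\ref{thm:cycle-path}; the number of these functions is $b_1(G_{\textrm{un}})$ and $b_1(G_{\textrm{un}})-1$ respectively, matching the dimension, so the span is the whole block.

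For the block $\mathcal{A}+\mathcal{B}$, I would combine the intertwiner $U_\theta L=L\Lambda_\theta$ of Lemma~\ref{lem:masterEq}, the identity $\ker L=\ker(I+T)=\ker(e^{-2i\theta}+\Lambda_\theta)$ of Lemma~\ref{lem:kerL}, the eigenspace correspondence $\ker(e^{2i\phi(\mu)}-\Lambda_\theta)=D\ker(\mu-T)$ of Lemma~\ref{lem:Kokubuncho}, and identity~(\ref{eq:Tachimachi}). Away from the boundary value $\phi_-$ this gives $\ker(U_\theta|_{\mathcal{A}+\mathcal{B}}-e^{2i\phi})=LD\ker(\tfrac{\cos\phi}{\sin\theta}-T)$, and substituting $D(f\oplus g)=f\oplus(ie^{i(\theta+\phi)}g)$ and $L=[A\ B]$ rewrites the right side as $\{Af+ie^{i(\theta+\phi)}Bg : f\oplus g\in\ker(\tfrac{\cos\phi}{\sin\theta}-T)\}$. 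At $\phi=\phi_-$, i.e.\ eigenvalue $-e^{2i\theta}$, the eigenspace is $\mathcal{A}\cap\mathcal{B}$, which by Lemma~\ref{lem:dimAB} equals $\mathbb{C}A\pi_1$ if $T$ is reversible and is $0$ otherwise. For the spectrum, Corollary~\ref{cor:eigenvalues} already restricts every eigenvalue of $U_\theta|_{\mathcal{A}+\mathcal{B}}$ to the form $e^{2i\phi(\mu)}$, $\mu\in\sigma(T)$; what remains is to decide which survive the lift $LD$. If $T$ is nonreversible then $\ker(I+T)=0$ by Corollary~\ref{lem:DBC_corol}, so $LD$ is injective on each $\Lambda_\theta$-eigenspace and every $\mu\in\sigma(T)$ contributes. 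If $T$ is reversible and $\theta\ne\pi/2$, the boundary eigenvector carrying $\mu=1$ lifts either to $0$ or into the one-dimensional $\mathcal{A}\cap\mathcal{B}$ already accounted for (the coefficient multiplying $A\pi_1$ degenerates because $A\pi_1=B\pi_2$), so it is $\mu=1$ that must be removed from the list. If $\theta=\pi/2$, the relation $\cos\phi=\mu$ collapses $\mu=1$ and $\mu=-1$ onto the single eigenvalue $1$; the computation $\ker((I-U_{\pi/2})L)=\ker((I-T)(I+T))=\ker(I-T)\oplus\ker L$ (Gaussian elimination together with Lemma~\ref{lem:kerL}) then shows the $\phi=0$ eigenspace is $L\ker(I-T)=\mathbb{C}A\pi_1$, so here it is $\mu=-1$ that drops out.

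The main obstacle is exactly this boundary bookkeeping: one must verify that in each reversible subcase precisely one eigenvector of $\Lambda_\theta$ is annihilated by $L$ (the $\mu=1$ vector, respectively the $\mu=-1$ vector when $\theta=\pi/2$), so that the corresponding eigenvalue of $U_\theta$ is either genuinely absent from $\mathcal{A}+\mathcal{B}$ or is instead supplied by $\mathcal{A}\cap\mathcal{B}$, and that no other eigenvector is lost. Once this is pinned down, I would close with a dimension count: $\dim(\mathcal{A}+\mathcal{B})=m+n-\dim(\mathcal{A}\cap\mathcal{B})$ together with the $(\mathcal{A}+\mathcal{B})^\perp$ dimension from Eq.~(\ref{eq:Omachi}) add up to $\nu=|E_{\textrm{un}}|$, which certifies that the eigenvectors listed in parts~1 and~2 form a complete eigenbasis of $U_\theta$.
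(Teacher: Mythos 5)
Your proposal is correct and follows essentially the same route as the paper, whose proof of Theorem~\ref{thm:main} is precisely the aggregation of Lemmas~\ref{lem:dimAB}--\ref{lem:kerL}, Eqs.~(\ref{eq:Tachimachi})--(\ref{eq:Omachi}), the $\theta\neq\pi/2$ versus $\theta=\pi/2$ case analysis of Subsection~\ref{subsec:Inherited}, and Propositions~\ref{thm:cycle}--\ref{thm:cycle-path} that you invoke. The only (harmless) deviation is that you exclude $\mu=1$ in the reversible, $\theta\neq\pi/2$ case by directly observing that $LD$ annihilates $\ker(I-T)$ at $\phi=\phi_+$ (since $D$ sends $\pi_1\oplus\pi_2$ into $\ker L=\mathbb{C}(\pi_1\oplus(-\pi_2))$), whereas the paper reaches the same conclusion by a dimension-count contradiction.
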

%\begin{remark}
Note that the eigenvalues of $U_\theta$ were obtained via two different methods. The first method is described in Section~\ref{eigvals} and the eigenvalues are listed in Corollary~\ref{cor:eigenvalues}, items~(\ref{eq:062712}) and~(\ref{eq:062711}). The second method is described in Section~\ref{eigvecs} and were obtained using Gaussian elimination method. Note that those methods are consistent with each other.
%\end{remark}

%%%%%%%%%%%
%%%%%%%%%%%
\section{Example: kagome lattice}\label{kagome}
The kagome (or trihexagonal) lattice~\cite{Syo51,Sun13,KoShSu} is known to be the line graph of the hexagonal lattice and it is straightforward to check that the hexagonal lattice is the clique graph of the kagome lattice. The kagome lattice is 2-tessellable because the hexagonal lattice is 2-colorable, as can be checked in Fig.~\ref{Fig2}(a).
Fig.~\ref{Fig2}(a) also shows how we have embedded the kagome lattice in $\mathbb{R}^2$. For each horizontal line, there are upper and lower triangles at each crossing with a vertical line.  Take the lower red triangle in the center of Fig.~\ref{Fig2}(a), whose vertices are labeled by $\{1,2,3\}$, as a representative cell at $(x,y)\in \mathbb{Z}^2$. Then, each vertex of the kagome lattice is represented by $\mathbb{Z}^2\times\{1,2,3\}$.  Fig.~\ref{Fig2}(b) shows the underlying graph, which is the hexagonal lattice.

%Figure
\begin{figure}[!ht]
\begin{center}
	\includegraphics[width=100mm]{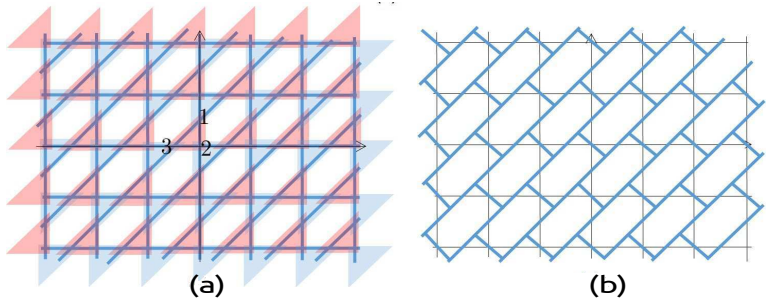}
\end{center}
\caption{ Graph (a) depicts the embedding of the kagome lattice and a tessellation cover, where $\mathcal{T}_1$ is comprises the red tiles
and $\mathcal{T}_2$ the blue tiles. Note the labeling of the red clique in the center. Graph~(b) describes the underlying bipartite graph, which is the hexagonal lattice.
}
\label{Fig2}
\end{figure}

Let $\mathcal{T}_1$ and $\mathcal{T}_2$ be the tessellations comprising the upper triangles and the lower triangles, respectively.
Define the $3$-dimensional self-adjoint unitary operators associated with each clique $E_1=2|\alpha\rangle\langle \alpha|-1$ and $E_2=2|\beta\rangle\langle \beta|-1$, where $|\alpha\rangle$ and $|\beta\rangle$ are unit vectors in $\mathbb{C}^3$. Set $H_1:=\oplus_{\mathcal{T}_1}E_1$ and $H_2:=\oplus_{\mathcal{T}_2}E_2$. Then, the evolution operator is $U_\theta=-e^{i\theta H_2}e^{i\theta H_1}$.

Due to the translational symmetries of the kagome lattice, we can use the Fourier transform $\mathcal{F}: \ell^2(\mathbb{Z}^2\times \{1,2,3\})\to L^2([0,2\pi)^2 \times \{1,2,3\})$, which is defined by 
$$\hat{\psi}(k,l,j):=(\mathcal{F}\psi)(k,l,j)=\sum_{x,y\in \mathbb{Z}}\psi(x,y,j)e^{i(kx+ly)},$$
where $j\in\{1,2,3\}$, and its inverse by 
\[\psi(x,y,j):=(\mathcal{F}^{-1}\hat{\psi})(x,y,j)=\frac{1}{(2\pi)^2}\int_{0}^{2\pi}\int_{0}^{2\pi}\hat{\psi}(k,l,j) e^{-i(kx+ly)}\text{d}k\text{d}l.\] 
In the Fourier space, the dynamic is described by a reduced $3\times 3$ evolution operator 
	\begin{equation}
        \hat{U}_\theta(k,l):=-{W}_{k,l}e^{i\theta E_2}{W}_{k,l}^{\dagger} e^{i\theta E_1}=-e^{i\theta E_2'(k,l)}e^{i\theta E_1},
        \end{equation}
where $W_{k,l}=\mathrm{diag}(1, e^{il},e^{i(k+l)})$ and $E_2'(k,l)=2|\beta'(k,l)\rangle\langle\beta'(k,l)|-1$ with $|\beta'(k,l)\rangle=W_{k,l}|\beta\rangle$, which can be obtained directly from the quotient graph of the kagome lattice (see left-hand graph of Fig.~\ref{Fig3}). A tessellation cover of the quotient graph comprises two polygons, which are associated with vectors $|\alpha\rangle$ and $|\beta'(k,l)\rangle$. The intersection graph of the quotient graph is the right-hand graph of Fig.~\ref{Fig3}.
The discriminant operator of $\hat{U}_\theta(k,l)$ for $|\alpha\rangle=|\beta\rangle=1/\sqrt{3}[1\;1\;1]^\dagger$ is given by 
\begin{align*}
\hat{T}(k,l) &= \begin{bmatrix} 0 & \langle \alpha| \beta'(k,l)\rangle \\ \langle \beta'(k,l)|\alpha\rangle & 0 \end{bmatrix} \\
	&= \frac{1}{3} \begin{bmatrix} 0 & 1+e^{il}+e^{i(k+l)} \\ 1+e^{-ik}+e^{-i(k+l)} & 0 \end{bmatrix}.
\end{align*}
 $\hat{T}(k,l)$ is defined on the intersection graph and its spectrum is $\sigma(\hat{T}(k,l))=\{\pm|1+e^{il}+e^{i(k+l)}|/3\}$. Using Theorem~\ref{thm:main}, we obtain 
	\begin{align*} 
        &\sigma(\hat{U}_\theta(k,l)|_{\mathcal{A}+\mathcal{B}})=\Big\{e^{i\eta} \;|\; \cos \eta
        	=-1+\frac{2}{3\sin^2\theta}+\frac{4}{9\sin^2\theta}(\cos l+\cos k+\cos(k+l))\Big\}, \\
	&\sigma(\hat{U}_\theta(k,l)|_{(\mathcal{A}+\mathcal{B})^\perp}
        	=\{-e^{-2i\theta}\}.
        \end{align*}

%Figure
\begin{figure}[!ht]
\begin{center}
\begingroup%
  \makeatletter%
  \providecommand\color[2][]{%
    \errmessage{(Inkscape) Color is used for the text in Inkscape, but the package 'color.sty' is not loaded}%
    \renewcommand\color[2][]{}%
  }%
  \providecommand\transparent[1]{%
    \errmessage{(Inkscape) Transparency is used (non-zero) for the text in Inkscape, but the package 'transparent.sty' is not loaded}%
    \renewcommand\transparent[1]{}%
  }%
  \providecommand\rotatebox[2]{#2}%
  \newcommand*\fsize{\dimexpr\f@size pt\relax}%
  \newcommand*\lineheight[1]{\fontsize{\fsize}{#1\fsize}\selectfont}%
  \ifx\svgwidth\undefined%
    \setlength{\unitlength}{341.08496781bp}%
    \ifx\svgscale\undefined%
      \relax%
    \else%
      \setlength{\unitlength}{\unitlength * \real{\svgscale}}%
    \fi%
  \else%
    \setlength{\unitlength}{\svgwidth}%
  \fi%
  \global\let\svgwidth\undefined%
  \global\let\svgscale\undefined%
  \makeatother%
  \begin{picture}(1,0.41250738)%
    \lineheight{1}%
    \setlength\tabcolsep{0pt}%
    \put(0,0){\includegraphics[width=\unitlength]{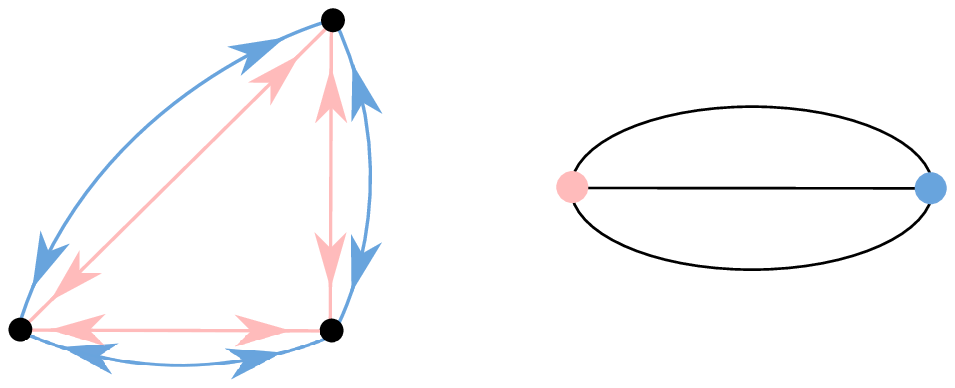}}%
    \put(0.70111719,0.29284137){\color[rgb]{0,0,0}\makebox(0,0)[lt]{\lineheight{1.25}\smash{\begin{tabular}[t]{l}$1$\end{tabular}}}}%
    \put(0.70111719,0.2237254){\color[rgb]{0,0,0}\makebox(0,0)[lt]{\lineheight{1.25}\smash{\begin{tabular}[t]{l}$2$\end{tabular}}}}%
    \put(0.70362205,0.15266755){\color[rgb]{0,0,0}\makebox(0,0)[lt]{\lineheight{1.25}\smash{\begin{tabular}[t]{l}$3$\end{tabular}}}}%
    \put(0.51243556,0.20541224){\color[rgb]{0,0,0}\makebox(0,0)[lt]{\lineheight{1.25}\smash{\begin{tabular}[t]{l}$\alpha$\end{tabular}}}}%
    \put(0.88112206,0.20101451){\color[rgb]{0,0,0}\makebox(0,0)[lt]{\lineheight{1.25}\smash{\begin{tabular}[t]{l}$\beta'(k,l)$\end{tabular}}}}%
    \put(0.35711967,0.37425724){\color[rgb]{0,0,0}\makebox(0,0)[lt]{\lineheight{1.25}\smash{\begin{tabular}[t]{l}$1$\end{tabular}}}}%
    \put(0.36202021,0.05538975){\color[rgb]{0,0,0}\makebox(0,0)[lt]{\lineheight{1.25}\smash{\begin{tabular}[t]{l}$2$\end{tabular}}}}%
    \put(0.05768683,0.06584591){\color[rgb]{0,0,0}\makebox(0,0)[lt]{\lineheight{1.25}\smash{\begin{tabular}[t]{l}$3$\end{tabular}}}}%
    \put(0.30634089,0.11810996){\color[rgb]{0,0,0}\makebox(0,0)[lt]{\lineheight{1.25}\smash{\begin{tabular}[t]{l}$1$\end{tabular}}}}%
    \put(0.15806503,0.10988871){\color[rgb]{0,0,0}\makebox(0,0)[lt]{\lineheight{1.25}\smash{\begin{tabular}[t]{l}$1$\end{tabular}}}}%
    \put(0.31885875,0.27088744){\color[rgb]{0,0,0}\makebox(0,0)[lt]{\lineheight{1.25}\smash{\begin{tabular}[t]{l}$1$\end{tabular}}}}%
    \put(0.40095022,0.28471121){\color[rgb]{0,0,0}\makebox(0,0)[lt]{\lineheight{1.25}\smash{\begin{tabular}[t]{l}$e^{-i l}$\end{tabular}}}}%
    \put(0.40089493,0.14060529){\color[rgb]{0,0,0}\makebox(0,0)[lt]{\lineheight{1.25}\smash{\begin{tabular}[t]{l}$e^{i l}$\end{tabular}}}}%
    \put(0.12341075,0.01584466){\color[rgb]{0,0,0}\makebox(0,0)[lt]{\lineheight{1.25}\smash{\begin{tabular}[t]{l}$e^{i k}$\end{tabular}}}}%
    \put(0.24912241,0.01497775){\color[rgb]{0,0,0}\makebox(0,0)[lt]{\lineheight{1.25}\smash{\begin{tabular}[t]{l}$e^{-i k}$\end{tabular}}}}%
    \put(0.01943037,0.1577393){\color[rgb]{0,0,0}\makebox(0,0)[lt]{\lineheight{1.25}\smash{\begin{tabular}[t]{l}$e^{i (k+l)}$\end{tabular}}}}%
    \put(0.17131814,0.33095402){\color[rgb]{0,0,0}\makebox(0,0)[lt]{\lineheight{1.25}\smash{\begin{tabular}[t]{l}$e^{-i (k+l)}$\end{tabular}}}}%
  \end{picture}%
\endgroup%
\end{center}
\caption{The left-hand graph is the quotient graph of the kagome lattice, which is  is a triangle with double edges. We have depicted the weights of the red and blue arcs.
The right-hand graph is the intersection graph of the quotient graph. The labels of the edges are $1$, $2$, and $3$. 
}
\label{Fig3}
\end{figure}

The eigenvectors of $\hat{U}_\theta(k,l)$ in $(\mathcal{A}+\mathcal{B})$ are obtained from the eigenvectors of $\hat{T}(k,l)$ and the eigenvectors in $(\mathcal{A}+\mathcal{B})^\perp$ are obtained using the cycle-path method. Let us focus on the latter.        
Let the set of edges of the quotient graph be $\{1,2,3\}$ as depicted in Fig.~\ref{Fig3}. The labeling can be converted to the notation of Section~\ref{subsec:Cycle-path} by using  $\{e_1,e_2,f_1\}$ and by identifying $f_2$ with $e_2$. Since $\hat{T}(k,l)$ does not have $(+1)$-eigenvectors when $(k,l)\neq (0,0)$, $\hat{T}(k,l)$ is nonreversible. The spanning tree of the intersection graph contains only one edge (take the one with label $2$). There are two fundamental cycles: $c_0=\{1,2\}$ and $c=\{2,3\}$. We use Case~2 of Proposition~\ref{thm:cycle-path} in order to compute an eigenvector $\hat{\psi}_c^{c_0}\in \ker(e^{-2i\theta}+\hat{U}_\theta)$. We have $a(1)=a(2)=e(3)=1/\sqrt{3}$ and $b(1)=1/\sqrt{3}$,  $b(2)=e^{il}/\sqrt{3}$, $b(3)=e^{i(k+l)}/\sqrt{3}$. We set $\hat{\psi}^{c_0}(1)=1$, and using Eq.~(\ref{eq:psie2j}) we obtain $\hat{\psi}^{c_0}(2)=-e^{il}$. Using Eq.~(\ref{Eq:psi_c(f1)}), we obtain $\hat{\psi}_c(3)=(e^{il}-1)/(1-e^{-ik})$ and then $\hat{\psi}_c(2)=(1-e^{il})/(e^{ik}-1)$. Adding vectors $\hat{\psi}^{c_0}$ and $\hat{\psi}_c$, we obtain $\hat{\psi}_c^{c_0}=(1, (e^{-ik}-e^{il})/(1-e^{-ik}), (e^{il}-1)/(1-e^{-ik}))$, which is an eigenvector of $\hat{U}_\theta(k,l)$ with eigenvalue $(-e^{-2i\theta})$.

By applying the inverse Fourier transform, the eigenvectors, which have finite support, are lifted up to the real space of the original graph $G$ and are expressed by 
	\[ \psi_{(x,y)}(x',y',j)=
        	\begin{cases}
                1 & \text{if $(x',y',j)=\{(x,y,1),(x,y+1,3),(x-1,y,2)\}$,} \\
                -1 & \text{if $(x',y',j)=\{(x,y,3),(x,y+1,2),(x-1,y,1)\}$,} \\
                0 & \text{otherwise,}
                \end{cases} \]
for any $(x,y)\in \mathbb{Z}^2$. The support of $\psi_{(x,y)}$ is a $6$-cycle of the kagome lattice. 
Note that if the initial state has overlap with any of these eigenvectors, there will be localization.

%%%%%%%
%%%%%%%%%%%
%%%%%%%%%%%
%\section{Conclusion}
%%%%%%%%%%%
%%%%%%%%%%%

%%%%%%%%%%%%%%%%%%%%%%%%%%%%%%%%%%%%%%%%%%%%%%%%%%%%%%%%%%%%%%%%%%%%%%%%%%%%%%%%%%%%%%%%%%%%%%%%%%%%%%%%%%%
%%%%%%%%%%%%%%%%%%%%%%%%%%%%%%%%%%%%%%%%%%%%%%%%%%%%%%%%%%%%%%%%%%%%%%%%%%%%%%%%%%%%%%%%%%%%%%%%%%%%%%%%%%%

%Figure
%\begin{figure}[htbp]
%\begin{center}
%	\includegraphics[width=150mm]{}
%\end{center}
%\caption{}
%\label{fig:one}
%\end{figure}

\section*{Acknowledgments}
YuH's work was supported in part by Japan Society for the Promotion of Science Grant-in-Aid for Scientific Research 
(C) 25400208, (C) 18K03401 and (A) 15H02055. RP is grateful to the kind hospitality of the Graduate School of Information Sciences and Research Alliance Center for Mathematical Sciences (RACMaS), Tohoku University, which sponsored his visit during the winter of 2018.
IS is partially supported by the Grant-in-Aid for Scientific Research (C) of Japan Society for the Promotion of Science (Grant No.~15K04985). 
ES acknowledges financial support from the Grant-in-Aid for Young Scientists (B) and of Scientific Research (B) Japan Society for the Promotion of Science (Grant No. 16K17637, No. 16K03939).
\par

%\appendix
%\def\thesection{Appendix \Alph{section}}
%\renewcommand{\theequation}{A.\arabic{equation}}
%\setcounter{equation}{0}

%\section{}
% Appendix here

%\bibliography{Bibliografia}

\begin{thebibliography}{10}


\bibitem{AAKV01}
D.~Aharonov, A.~Ambainis, J.~Kempe, and U.~Vazirani.
\newblock Quantum walks on graphs.
\newblock In {\em Proc. 33th STOC}, pages 50--59, New York, 2001. ACM.

\bibitem{Amb04}
A.~Ambainis.
\newblock Quantum walk algorithm for element distinctness.
\newblock In {\em Proc. 45th Annual IEEE Symposium on Foundations of Computer
  Science FOCS}, pages 22--31, Washington, 2004.

\bibitem{CMQKSPDSSBM15}
F.~Cardano, F.~Massa, H.~Qassim, E.~Karimi, S.~Slussarenko, D.~Paparo,
  C.~De~Lisio, F.~Sciarrino, E.~Santamato, R.W. Boyd, and L.~Marrucci.
\newblock Quantum walks and wavepacket dynamics on a lattice with twisted
  photons.
\newblock {\em Science Advances}, 1(2):e1500087, 2015.

\bibitem{CBS10}
C. M. Chandrashekar, S. Banerjee, and R. Srikanth.
\newblock Relationship between quantum walks and relativistic quantum mechanics.
\newblock {\em Phys. Rev. A}, 81:062340, 2010

\bibitem{Chi10}
A.~M. Childs.
\newblock On the relationship between continuous- and discrete-time quantum
  walk.
\newblock {\em Commun. Math. Phys.}, 294(2):581--603, 2010.

\bibitem{FG98}
E.~Farhi and S.~Gutmann.
\newblock Quantum computation and decision trees.
\newblock {\em Phys. Rev. A}, 58:915--928, 1998.

\bibitem{GASWWMA13}
M.~Genske, W.~Alt, A.~Steffen, A.~H. Werner, R.~F. Werner, D.~Meschede, and
  A.~Alberti.
\newblock Electric quantum walks with individual atoms.
\newblock {\em Phys. Rev. Lett.}, 110:190601, 2013.

\bibitem{GY05}
J.~L. Gross and J.~Yellen.
\newblock {\em Graph Theory and Its Applications}.
\newblock Chapman \& Hall/CRC, Boca Raton, FL, 2005.

\bibitem{Gudder}
S.~Gudder.
\newblock Quantum Markov chain. 
\newblock {\em J. Math. Phys.}, 49(7):072105, 2008.

\bibitem{HKSS14}
Yu.~Higuchi, N.~Konno, I.~Sato, and E.~Segawa.
\newblock Spectral and asymptotic properties of {G}rover walks on crystal lattices.
\newblock {\em J. Funct. Anal.}, 267(11):4197--4235, 2014.

\bibitem{HS04} Yu.~Higuchi and T.~Shirai.
\newblock Some spectral and geometric properties for infinite graphs.
\newblock {\em Contemp. Math.} 347:29--56, 2004.

\bibitem{Hoe92}
C.~Hoede.
\newblock A characterization of consistent marked graphs.
\newblock {\em J. Graph Theory}, 16(1):17--23, 1992.

\bibitem{KRBD10}
T.~Kitagawa, M.~S. Rudner, E.~Berg, and E.~Demler.
\newblock Exploring topological phases with quantum walks.
\newblock {\em Phys. Rev. A}, 82:033429, 2010.

\bibitem{KIS18}
N.~Konno, Y.~Ide, and I.~Sato.
\newblock The spectral analysis of the unitary matrix of a 2-tessellable
  staggered quantum walk on a graph.
\newblock {\em Linear Algebra Appl.}, 545:207--225, 2018.

\bibitem{KPSS18}
N.~Konno, R.~Portugal, I.~Sato, and E.~Segawa.
\newblock Partition-based discrete-time quantum walks.
\newblock {\em Quantum Inf. Process.}, 17(4):100, 2018.

\bibitem{KoShSu}
M.~Kotani, T.~Shirai, and T.~Sunada. 
\newblock Asymptotic behavior of the transition probability of a random walk on an infinite graph. 
\newblock {\em J. Funct. Anal.}, 159:664--689, 1998.

\bibitem{Pet03}
D.~Peterson.
\newblock Gridline graphs: a review in two dimensions and an extension to
  higher dimensions.
\newblock {\em Discrete Appl. Math.}, 126(2):223--239, 2003.

\bibitem{PP17}
P.~Philipp and R.~Portugal.
\newblock Exact simulation of coined quantum walks with the continuous-time
  model.
\newblock {\em Quantum Inf. Process.}, 16(1):14, 2017.

\bibitem{Por16b}
R.~Portugal.
\newblock Staggered quantum walks on graphs.
\newblock {\em Phys. Rev. A}, 93:062335, 2016.

\bibitem{POM17}
R.~Portugal, M.~C. de~Oliveira, and J.~K. Moqadam.
\newblock Staggered quantum walks with {H}amiltonians.
\newblock {\em Phys. Rev. A}, 95:012328, 2017.

\bibitem{PSFG16}
R.~Portugal, R.~A.~M. Santos, T.~D. Fernandes, and D.~N. Gon{\c{c}}alves.
\newblock The staggered quantum walk model.
\newblock {\em Quantum Inf. Process.}, 15(1):85--101, 2016.

\bibitem{SKW03}
N.~Shenvi, J.~Kempe, and K.~B. Whaley.
\newblock A quantum random walk search algorithm.
\newblock {\em Phys. Rev. A}, 67(5):052307, 2003.

\bibitem{Sh} T.~Shirai.
\newblock The spectrum of infinite regular line graph.
\newblock {\em Trans. Amer. Math. Soc.}, 352:115-132, 2000.

\bibitem{Str06}
F.~W. Strauch.
\newblock Connecting the discrete- and continuous-time quantum walks.
\newblock {\em Phys. Rev. A}, 74(3):030301, 2006.

\bibitem{Sun13}
T. Sunada.
\newblock {\em Topological Crystallography: With a View Towards Discrete Geometric Analysis}. Springer, New York, 2013.

\bibitem{TS12}
T.~Sunada and T.~Tate.
Asymptotic behavior of quantum walks on the line.
\newblock {\em J. Funct. Anal.}, 262, 2608--2645, 2012.

\bibitem{Suz16}
A. Suzuki.
Asymptotic velocity of a position-dependent quantum walk.
\newblock {\em Quantum Inf. Process.}, 15(1):103--119, 2016.

\bibitem{Syo51}
I.~Sy\^ozi.
\newblock Statistics of kagom\'e lattice.
\newblock {\em Progress of Theoretical Physics}, 6(3):306--308, 1951.

\bibitem{Sze04a}
M.~Szegedy.
\newblock Quantum speed-up of {M}arkov chain based algorithms.
\newblock In {\em Proc. 45th Annual IEEE Symposium on Foundations of Computer
  Science}, FOCS '04, pages 32--41, Washington, 2004.


\end{thebibliography}
%\bibliographystyle{unsrt}

\end{document}